\documentclass{article}
\usepackage[margin=2cm]{geometry}

\usepackage{amssymb,amsmath,amsfonts,amsthm}
\usepackage{multicol,multirow}
\usepackage{hyperref,cleveref,enumerate,verbatim,lmodern}
\usepackage{listings,tikz}
\usepackage{sidecap,wrapfig}
\usetikzlibrary{calc}
\usetikzlibrary{positioning,automata,fit,arrows.meta}
\tikzset{arrows={[scale=1.1]}}
\tikzset{every edge/.style={draw,->,>=Latex,auto}}
\usepackage{algorithm,algorithmicx,calc}
\usepackage[noend]{algpseudocode}
\MakeRobust{\Call}
\makeatletter
\algnewcommand\algorithmicinput{\textbf{Input:}}\algnewcommand\Input{\item[\algorithmicinput]}
\algnewcommand\algorithmicoutput{\textbf{Output:}}\algnewcommand\Output{\item[\algorithmicoutput]}
\algnewcommand\Complexity{\item[\textbf{Complexity:}]}

\newcommand{\algorithmiccontinue}{\textbf{continue}}\newcommand{\Continue}{\State \algorithmiccontinue}
\newcommand{\Assert}[1]{\textbf{assert}(#1)}
\newcommand{\true}{\textbf{true}}
\newcommand{\false}{\textbf{false}}
\newcommand{\none}{\textbf{none}}

\makeatother

\usepackage{url}
\usepackage{xcolor}
\usepackage{anyfontsize}
\DeclareSymbolFont{rsfscript}{OMS}{rsfs}{m}{n}
\DeclareSymbolFontAlphabet{\mathrsfs}{rsfscript}

\newcommand{\Cerny}{\v{C}ern{\'y} }

\usetikzlibrary{decorations}
\pgfdeclaredecoration{dashsoliddouble}{initial}{
  \state{initial}[width=\pgfdecoratedinputsegmentlength]{
    \pgfmathsetlengthmacro\lw{.3pt+.5\pgflinewidth}
    \begin{pgfscope}
      \pgfpathmoveto{\pgfpoint{0pt}{\lw}}%
      \pgfpathlineto{\pgfpoint{\pgfdecoratedinputsegmentlength}{\lw}}%
      \pgfmathtruncatemacro\dashnum{%
        round((\pgfdecoratedinputsegmentlength-3pt)/6pt)
      }
      \pgfmathsetmacro\dashscale{%
        \pgfdecoratedinputsegmentlength/(\dashnum*6pt + 3pt)
      }
      \pgfmathsetlengthmacro\dashunit{3pt*\dashscale}
      \pgfsetdash{{\dashunit}{\dashunit}}{0pt}
      \pgfusepath{stroke}
      \pgfsetdash{}{0pt}
      \pgfpathmoveto{\pgfpoint{0pt}{-\lw}}%
      \pgfpathlineto{\pgfpoint{\pgfdecoratedinputsegmentlength}{-\lw}}%
      \pgfusepath{stroke}
    \end{pgfscope}
  }
}
\newtheorem{theorem}{Theorem}
\newtheorem{corollary}[theorem]{Corollary}
\newtheorem{lemma}[theorem]{Lemma}
\newtheorem{proposition}[theorem]{Proposition}

\newtheorem{remark}[theorem]{Remark}
\numberwithin{theorem}{section}
\newtheorem{definition}{Definition}

\numberwithin{definition}{section}
\newtheorem{example}[theorem]{Example}

\renewcommand{\O}{\mathcal{O}}

\title{Completely reachable automata: a quadratic decision algorithm\\and a quadratic upper bound on the reaching threshold}
\author{Robert Ferens and Marek Szyku{\l}a\\
University of Wroc{\l}aw, Wroc{\l}aw, Poland\\
{\tt robert.ferens@cs.uni.wroc.pl}, {\tt msz@cs.uni.wroc.pl}}
\begin{document}
\maketitle
\begin{abstract}
A complete deterministic finite (semi)automaton (DFA) with a set of states $Q$ is \emph{completely reachable} if every nonempty subset of $Q$ is the image of the action of some word applied to $Q$.
The concept of completely reachable automata appeared, in particular, in connection with synchronizing automata; the class contains the \v{C}ern{\'y} automata and covers several distinguished subclasses.
The notion was introduced by Bondar and Volkov (2016), who also raised the question about the complexity of deciding if an automaton is completely reachable.
We develop an algorithm solving this problem, which works in ${\mathcal{O}(|\Sigma|\cdot n^2)}$ time and $\mathcal{O}(|\Sigma|\cdot n)$ space, where $n=|Q|$ is the number of states and $|\Sigma|$ is the size of the input alphabet.
In the second part, we prove a weak Don's conjecture for this class of automata: a nonempty subset of states $S \subseteq Q$ is reachable with a word of length at most $2n(n-|S|) - n \cdot H_{n-|S|}$, where $H_i$ is the $i$-th harmonic number.
This implies a quadratic upper bound in $n$ on the length of the shortest synchronizing words (reset threshold) for the class of completely reachable automata and generalizes earlier upper bounds derived for its subclasses.
\end{abstract}
\section{Introduction}

We consider automata (DFAs) that are finite, deterministic, and have a completely defined transition function.
Furthermore, initial and final states are irrelevant to our problems.

The concept of completely reachable automata originates from the theory of synchronizing automata.
An automaton is \emph{synchronizing} if starting from all states, after reading a suitable word called a \emph{reset} word, we narrow the set of possible states where the automaton can be to a singleton.
On the other hand, an automaton is \emph{completely reachable} if starting from the set of all states, we can reach every nonempty subset of states.
In other words, every nonempty subset of states is the image of the action of some word applied to the set of all states.
Thus, every completely reachable automaton is synchronizing.

Synchronizing automata are most famous due to one of the most longstanding open problems in automata theory: the \Cerny conjecture from 1969, which states that every synchronizing $n$-state automaton admits a reset word of length at most $(n-1)^2$.
The currently known best upper bound on the length of the shortest reset words is cubic in $n$ \cite{Pin1983OnTwoCombinatorialProblems,Shitov2019,Szykula2018ImprovingTheUpperBound}: $\sim 0.1654\ n^3 + \O(n^2)$.
Bounds on this length, known as \emph{reset threshold}, were a topic of extensive studies for the general case and subclasses of automata, as well as the related computational problems.
Most of the research on the topic of synchronizing automata was collected and comprehensively described in the recent survey \cite{Volkov2022Survey}; we also refer to older ones \cite{KariVolkov2021Survey,Volkov2008Survey}.
Applications of synchronizing automata include, e.g., testing of reactive systems \cite{Sandberg2005Survey} and synchronization of codes \cite{BFRS21SynchronizingStronglyConnectedPartialDFAs,BPR2010CodesAndAutomata}.

\subsection{The decision problem}

The notion of completely reachable automata was first introduced in~2016 by Bondar and Volkov \cite{BV2016CompletelyReachableAutomata}, who also asked about the complexity of the computational problem of deciding whether a given automaton is completely reachable.
They showed that determining whether a given subset of states is reachable is PSPACE-complete \cite{BV2016CompletelyReachableAutomata}.

Later studies revealed a connection between completely reachable automata and the so-called Rystsov graphs, which, after a suitable generalization, can be used to characterize completely reachable automata \cite{BCV2023CompletelyReachableInterplay,BV2018CharacterizationOfCompletelyReachable}.
However, this does not yet lead to an effective algorithm, as it is unknown whether it is possible to compute these graphs in polynomial time.

The computational problem of whether an automaton is completely reachable got some partial solutions.
There was proposed an algorithm that in $\O(|\Sigma|\cdot n^2)$ time checks whether permutational letters together with singular letters of rank $n-1$, i.e., those whose action maps the set of all states to a subset of size exactly $n-1$, suffice for the complete reachability of a given automaton \cite{GJ2019HardlyReachableSubsets}.
Recently, the case of binary automata was solved with a quasilinear-time ($\O(n \log \log n)$) algorithm \cite{CV2022BinaryCompletelyReachable}, which strongly relies on the specificity of the actions of both letters if they ensure the complete reachability (e.g., one of the letters must act as a cycle on all states).

The analogous decision problem -- whether a given automaton is synchronizing -- is solvable in quadratic time in the number $n$ of states and linear in the size of the input alphabet $\Sigma$, i.e., in time $\O(|\Sigma|\cdot n^2)$ by a well-known algorithm \cite{Eppstein1990}.
The same applies to the problem of the reachability of a given (instead of any) singleton.
However, the question of whether there exists a faster algorithm remains a major open problem concerning synchronizing automata.

\subsection{Bounds on the reset threshold}

Concerning the bounds for the \Cerny problem (reset threshold), the class of completely reachable automata is particularly interesting, as it contains several previously studied subclasses.
First, it contains the \emph{\Cerny automata} \cite{Cerny1964}, which is the unique known infinite series of automata meeting the conjectured upper bound $(n-1)^2$.
It also includes some of the so-called \emph{slowly synchronizing} series \cite{AGV2013SlowlySynchronizing}, which achieve almost the same quantity (see~\cite{Maslennikova2019ResetComplexityOfIdeal} for a proof of their complete reachability).

Next, \emph{synchronizing automata with simple idempotents} are completely reachable (\cite{RS2023ResetThresholdsOfTransformationMonoids}).
These are automata such that each letter either acts as a permutation or its action maps exactly one state to another while fixing all the other states (the so-called \emph{simple idempotent}).
The obtained upper bound was quadratic ($2n^2-4n+2$) \cite{R2000EstimationSimpleIdempotents}, using an original method that introduced Rystsov graphs.
Recently, a reinvestigation of that method led to proving the bound $2n^2-6n+5$ for a slightly larger class of automata \cite{RS2023ResetThresholdsOfTransformationMonoids}.

The next subclass of completely reachable automata is called \emph{aperiodically 1-contracting automata} (\cite[Theorem~10]{D2016OneContracting}).
For the general case of aperiodically 1-contracting automata, there was no nontrivial upper bound found.
However, for the specific case where additionally there exists an \emph{efficient 1-contracting collection} of words, the \Cerny conjecture along with the stronger Don's conjecture has been proved (\cite[Theorem~2~(2)]{D2016OneContracting}).
This Don's conjecture states that for an $n$-state automaton and a subset $S$ of its states, if $S$ is reachable, then it can be reached with a word of length at most $n(n-|S|)$.
The conjecture was disproved in general \cite{GJ2019HardlyReachableSubsets} by constructing automata with subsets that are reachable but cannot be reached with a word shorter than $2^n/n$; these automata, however, have many unreachable subsets.
Thus, weaker variants of Don's conjecture were proposed: one restricting to completely reachable automata \cite[Problem~4]{GJ2019HardlyReachableSubsets}, and another, weaker but generally applicable to synchronizing automata, in relation to avoiding words \cite[Conjecture~15]{FSV21LowerBoundsOnAvoidingThresholds}.

Recently, progress has been made concerning Don's conjecture for binary completely reachable automata.
It has been partially confirmed in this class \cite{CV2023DonsConjectureForBinaryCompletelyReachable}, for the so-called \emph{standardized} binary completely reachable DFAs, yet disproved by exhibiting an infinite series of binary completely reachable automata with subsets of size $n-2$ that are not reachable with words shorter than $5/2 n - 3$ (whereas the Don's bound gives $2n$ in this case).
As for now, there are no other known nontrivial lower bounds for other cases than $|S|=n-2$.

The next studied subclass is the class of automata with a full transition monoid \cite{GGGJV2017BabaiAndCerny}.
These are automata where letters acting as generators of a transformation semigroup on the set of states yield all $n^n$ possible transformations, thus for every possible action there exists some word inducing it.
Clearly, complete reachability is a weaker property, as it requires the presence of just one transformation for each nonempty subset of states.
The upper bound obtained for automata with a full transition monoid was $2n^2-6n+5$; the technique behind it is a modification of Rystsov's method for simple idempotents.

Finally, it turned out that automata that contain a primitive permutation group together with a singular transformation of rank $n-1$ in its transition monoid are also completely reachable \cite{H2023NewCharacterizationsOfPrimitiveGroups}.
The synchronization of automata with a primitive group was particularly studied from the group theory perspective \cite{ACS2017SynchronizationAndItsFriends}, and for the mentioned case with a transformation of rank $n-1$ a question about their reset thresholds was stated explicitly \cite[Problem~12.36]{ACS2017SynchronizationAndItsFriends}.

However, for the whole class of completely reachable automata, only a cubic bound was known, although better than in the general case: $7/48 n^3 + \O(n^2)$ \cite{BCV2023CompletelyReachableInterplay}, which was obtained through the technique of avoiding words \cite{Szykula2018ImprovingTheUpperBound}, using the fact that all subsets of size $n-1$ must be reachable with short words.

Other studies where complete reachability appears include descriptional complexity of formal languages, where it is related or is a part of similar properties of automata such as sync-maximality \cite{H2021CompletelyReachablePrimitiveAndStateComplexity}, complete distinguishability \cite{H2023CompletelyDistinguishable}, and reset complexity \cite{Maslennikova2019ResetComplexityOfIdeal}.

\subsection{Contribution and paper organization}

We begin with basic properties in~\Cref{sec:Preliminaries}. 
Then, in~\Cref{sec:Witnesses}, we introduce the concept of a \emph{witness} set -- an unreachable set of states whose predecessor sets satisfy some additional properties.
Witnesses turn out to be an efficient tool for verifying and certifying the (non)complete reachability of an automaton.
We design a verification function and show that a set can be tested for being a witness in linear time.
In opposition to witnesses, as an auxiliary result, we show that verifying whether a subset has a larger predecessor set is PSPACE-complete; this is another variant in the family of preimage problems, not considered before \cite{BFS21PreimageProblems}.

In~\Cref{sec:PolynomialTimeAlgorithm}, we design a polynomial-time algorithm for the problem of deciding whether an automaton is completely reachable, which solves the open question from~2016~\cite{BV2016CompletelyReachableAutomata}.
It relies on finding a witness by building the so-called \emph{reduction graph}.
The algorithm with suitable optimizations works in $\O(|\Sigma|\cdot n^2)$ time and $\O(|\Sigma|\cdot n)$ space.

In~\Cref{sec:Bounds}, we prove that every nonempty subset $S$ of states in a completely reachable $n$-state automaton is reachable with a word of length at most $2n(n-|S|) - n \cdot H_{n-|S|}$, where $H_i$ is the $i$-th harmonic number.
This implies a weaker version (by the factor of $2$) of Don's conjecture stated for completely reachable automata \cite[Problem~4]{GJ2019HardlyReachableSubsets}.
The bound is considerably better than $2n$ for large sets $S$.
For this solution, we extend the function verifying a witness and develop an original complement-intersecting technique, which allows finding a short extending word for a given subset of states (i.e., that gives a larger preimage), provided that the subset and all larger subsets are reachable.

In~\Cref{sec:Conclusions}, we draw implications of the results.
We conclude that a completely reachable $n$-state automaton has a reset threshold of at most $2n^2 - n \ln n - 2n$ (for $n \ge 3$).
This does not prove the \Cerny conjecture for this class but its weakened variant by the factor of $2$.
Our bound is the first quadratic upper bound for the class of completely reachable automata (previously only a cubic one was known) and also improves the previously known quadratic upper bounds obtained with different techniques for the mentioned proper subclasses.
We discuss the impact of the size of group orbits in a completely reachable automaton on the upper bounds and observe that the \Cerny conjecture holds for completely reachable automata without permutational letters or, more generally, for automata where the maximum size of group orbits of the permutation group contained in the transition monoid is at most $\ln n$.
We also note that the complete reachability of only large sets is sufficient to derive a subcubic upper bound on the reset threshold.
Additionally, as a side corollary from our algorithm, we show that a completely reachable automaton has at most $2n-2$ letters that are necessary for keeping it completely reachable.

This paper is the extended version of a conference paper \cite{thisICALP} and contains several new results, in particular, a significantly improved algorithm working in quadratic time in $n$.

\section{Preliminaries}\label{sec:Preliminaries}

A \emph{complete deterministic finite semiautomaton} (called simply \emph{automaton}) is a $3$-tuple $(Q,\Sigma,\delta)$, where $Q$ is a finite set of \emph{states}, $\Sigma$ is an \emph{input alphabet}, and $\delta\colon Q \times \Sigma \to Q$ is the \emph{transition function}, which is extended to a function $Q \times \Sigma^* \to Q$ in the usual way.
Throughout the paper, by $n$ we always denote the number of states in $Q$.
For a subset $S \subseteq Q$, by $\overline{S}$ we denote its complement $Q \setminus S$.

Given a~subset $S \subseteq Q$, the \emph{image} of $S$ under the action of a~word $w \in \Sigma^*$ is $\delta(S,w) = \{\delta(q,w) \mid q \in S\}$.
The \emph{preimage} of $S$ under the action of $w$ is $\delta^{-1}(S,w) = \{q \in Q \mid \delta(q,w) \in S\}$.
For a singleton $\{q\}$, we also simplify and write $\delta^{-1}(q,w)=\delta^{-1}(\{q\},w)$.
Note that $q \in \delta(Q,w)$ if and only if $\delta^{-1}(q,w) \ne \emptyset$.

The empty word is denoted by $\varepsilon$.
The \emph{rank} of a word $w \in \Sigma^*$ is $|\delta(Q,w)|$.
A word (or a~letter) $w$ is \emph{permutational} if it acts as a permutation on $Q$, thus if it has rank $n$.
Clearly, a permutational word consists of only permutational letters.
A nonpermutational word (or a~letter) is called \emph{singular}.

\subsection{Synchronization}

A \emph{reset word} is a word $w$ of rank $1$: $|\delta(Q,w)|=1$.
Equivalently, we have $\delta^{-1}(q,w) = Q$ for exactly one $q \in Q$.
If an automaton admits a reset word, then it is called \emph{synchronizing} and its \emph{reset threshold} is the length of the shortest reset words.

The central problem in the theory of synchronizing automata is the \emph{\v{C}ern\'{y} conjecture}, which states that every synchronizing $n$-state automaton has its reset threshold at most $(n-1)^2$.

\subsection{Reachability of sets}

For two subsets $S,T \subseteq Q$, if there exists a word $w \in \Sigma^*$ such that $\delta(T,w)=S$, then we say that $S$ is \emph{reachable from} $T$ \emph{with} the word $w$.
Then we also say that $T$ is a \emph{$w$-predecessor} of $S$.
It is simply a \emph{predecessor} of $S$ if it is a $w$-predecessor for some word $w$.
A set $S$ can have many $w$-predecessors, but if they exist, there is one maximal with respect to inclusion and size, which is the preimage $\delta^{-1}(S,w)$.

\begin{remark}\label{rem:predecessor_nonempty}
For $S \subseteq Q$ and $w \in \Sigma^*$, the preimage $\delta^{-1}(S,w)$ is a $w$-predecessor of $S$ if and only if $\delta^{-1}(q,w) \ne \emptyset$ for every state $q \in S$.
Equivalently, we have $\delta(\delta^{-1}(S,w))=S$.
\end{remark}

\begin{remark}\label{rem:predecessor_maximal}
For $S \subseteq Q$ and $w \in \Sigma^*$, if $\delta^{-1}(S,w)$ is a $w$-predecessor of $S$, then all $w$-predecessors of $S$ are contained in it, thus $\delta^{-1}(S,w)$ is maximal with respect to inclusion and size.
If $\delta^{-1}(S,w)$ is not a $w$-predecessor of $S$, then $S$ does not have any $w$-predecessors.
\end{remark}

A word $w$ is called \emph{extending} for a subset $S$, or we say that $w$ \emph{extends} $S$, if $|\delta^{-1}(S,w)| > |S|$.
It is called \emph{properly extending}\footnote{ In~\cite{CV2023DonsConjectureForBinaryCompletelyReachable}, a subset that admits a properly extending word is called \emph{expandable}.} if additionally $\delta^{-1}(S,w)$ is a $w$-predecessor of $S$.
This is equivalent to that $|\delta^{-1}(q,w)| \ge 1$ for all $q \in S$, and $|\delta^{-1}(q,w)| > 1$ for at least one $q \in S$.

A subset $S \subseteq Q$ is \emph{reachable} if $S$ is reachable from $Q$ with any word.
An automaton is \emph{completely reachable} if all nonempty subsets $S \subseteq Q$ are reachable.
Equivalently, $Q$ is a predecessor of all its nonempty subsets.
The latter leads to an alternative characterization of completely reachable automata:
\begin{remark}\label{rem:Characterization}
An automaton $(Q,\Sigma,\delta)$ is completely reachable if and only if for every nonempty proper subset of $Q$, there is a properly extending word.
\end{remark}

\textsc{Completely Reachable} is the following decision problem:
Given an automaton $(Q,\Sigma,\delta)$, is it completely reachable?

For a subset $S \subseteq Q$, the \emph{reaching threshold} is the length of the shortest words reaching $S$ from $Q$.
\emph{Don's conjecture} states that the reaching threshold of a nonempty subset $S$ is $\le n(n-|S|)$.
It is known to be false in general but holds in certain cases.

\begin{figure}[htb]\centering%
\begin{tikzpicture}[node distance=1.9cm,scale=1,every node/.style={transform shape}]
\tikzset{every loop/.style={min distance=.5cm,looseness=6}}
\node[state] (q0) {$q_0$};
\node[state] [right=of q0] (q1) {$q_1$};
\node[state] [below=of q1] (q2) {$q_2$};
\node[state] [left=of q2] (q3) {$q_3$};
\draw(q0) edge[dashed] node[auto,midway]{$a,b$} (q1);
\draw(q1) edge[dashed] node[auto,midway]{$a$} (q2);
\draw(q2) edge[dashed] node[auto,midway]{$a$} (q3);
\draw(q3) edge[dashed] node[auto,midway]{$a$} (q0);

\draw(q0) edge[out=-10,in=186,looseness=0.001] (q1);
\draw(q1) edge[loop,out=85,in=35,looseness=6] node[auto,midway]{$b$} (q1);
\draw(q2) edge[loop,out=-40,in=-80,looseness=6] node[auto,midway]{$b$} (q2);
\draw(q3) edge[loop,out=-100,in=-140,looseness=6] node[auto,midway]{$b$} (q3);
\end{tikzpicture}\hspace{1cm}%
\input{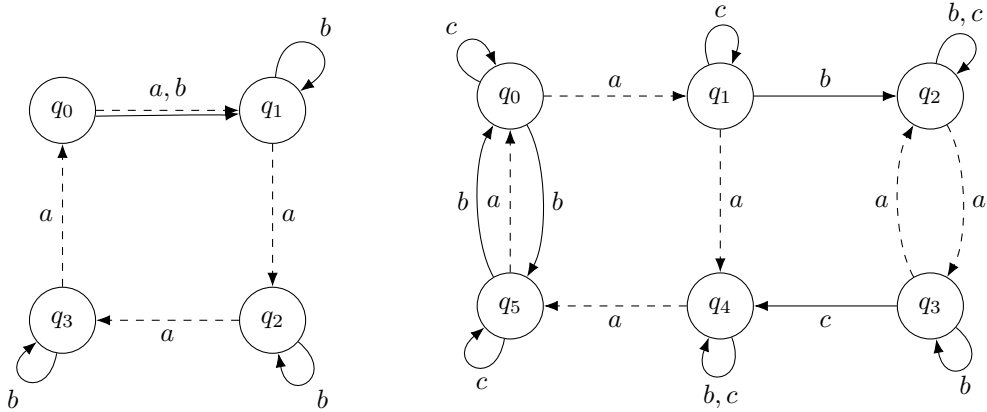}
\caption{Two completely reachable automata; the left one is from the \Cerny series \cite{Cerny1964}.
The transitions of permutational letters are dashed and those of singular letters are solid.}\label{fig:examples-completely}
\end{figure}

\begin{example}
\Cref{fig:examples-completely}~(left) shows the \Cerny automaton with $4$ states, which is completely reachable.
It could be demonstrated that the \Cerny automata meet the Don's upper bound for all $|S|$.
For $n=4$, the subsets with the largest reaching thresholds for each subset size are: $\{q_0,q_1,q_2\}$ $(4)$, $\{q_0,q_1\}$ $(8)$, and $\{q_0\}$ $(12)$.
\end{example}

\begin{example}\label{ex:Example1}
\Cref{fig:examples-completely}~(right) shows a completely reachable automaton that does not belong to any subclass of completely reachable automata mentioned in the introduction.
\end{example}
\begin{proof}[Proof of the example]
For complete reachability, we use the criterion from~\Cref{rem:Characterization}.
For every nonempty proper subset $S \subsetneq Q$, we find a properly extending word.

First, consider the subsets $S$ that contain exactly one from $\{q_1,q_2\}$.
If $q_1 \notin S$ and $q_2 \in S$, then letter $b$ is properly extending for $S$.
If $q_1 \in S$ and $q_2 \notin S$, then word $ca^3$ is properly extending for $S$.
Similarly, consider the subsets $S$ that contain exactly one from $\{q_3,q_4\}$.
If $q_3 \notin S$ and $q_4 \in S$, then letter $c$ is properly extending for $S$.
If $q_3 \in S$ and $q_4 \notin S$, then word $ba$ is properly extending for $S$.
Now, if $q_1,q_2 \in S$ and $q_5 \notin S$, or $q_1,q_2 \notin S$ and $q_5 \in S$, or $q_3,q_4 \in S$ and $q_0 \notin S$, or $q_3,q_4 \notin S$ and $q_0 \in S$, then by applying $a^{-2}$ (two times $a^{-1}$), we obtain a subset that fit in one of the above cases.

Since $S \notin \{\emptyset,Q\}$, there remain only two cases:
\begin{itemize}
\item $S = \{q_3,q_4,q_0\}$; then $\delta^{-1}(S,b) = \{q_3,q_4,q_5\}$, which fits in the previous case with $q_1 \notin S$ and $q_5 \in S$.
\item $S = \{q_1,q_2,q_5\}$, which is mapped back to the above by $a^{-1}$.
\end{itemize}

To see that the automaton does not belong to the mentioned classes, we make the following observations.

The automaton is ternary, thus not binary \cite{CV2023DonsConjectureForBinaryCompletelyReachable}.
Furthermore, removing any of the three letters yields a nonsynchronizing automaton, thus also not completely reachable.

The action of $b$ is not a permutation nor a simple idempotent, thus the automaton is not an \emph{automaton with simple idempotents} \cite{R2000EstimationSimpleIdempotents}.

The transition monoid is not full since \cite{GGGJV2017BabaiAndCerny}, e.g., the contained permutation group (generated solely by the action of $a$) is not transitive.
The contained permutation group is also not primitive \cite{RS2023ResetThresholdsOfTransformationMonoids}, since $\{q_0,q_1,q_4,q_5\}$ and $\{q_2,q_3\}$ are blocks preserved by the action of $a$.

For aperiodically 1-contracting subclass, recall from~\cite{D2016OneContracting} that a word of rank $n-1$ defines uniquely two states $q,q'$ such that $\delta^{-1}(q,w) = \emptyset$ and $|\delta^{-1}(q',w)|=2$; these states are commonly called \emph{excluded} and \emph{duplicated} states \cite{BCV2023CompletelyReachableInterplay}.
For such a word $w$, we assign the edge $(q \to q')$.
An automaton is \emph{aperiodically 1-contracting} if there exists a collection of $n$ words of rank $n-1$, whose edges form one full cycle on the states \cite[page~5]{D2016OneContracting}.
In our automaton, the complete list of edges of all words of rank $n-1$ is as follows:
\begin{align*}
(q_1 \to q_2)\;\text{by $b$},\quad(q_4 \to q_3)\;\text{by $ba$},\quad(q_5 \to q_2)\;\text{by $ba^2$},\quad(q_0 \to q_3)\;\text{by $ba^3$},\\
(q_3 \to q_4)\;\text{by $c$},\quad(q_2 \to q_5)\;\text{by $ca$},\quad(q_3 \to q_0)\;\text{by $ca^2$},\quad(q_2 \to q_1)\;\text{by $ca^3$}.
\end{align*}
These edges can be obtained with other words too, e.g., $(q_3 \to q_4)$ by $bac$.
Yet, the list of exhaustive because every such edge can be obtained with a word starting from a letter of rank $n-1$ followed by permutational letters.
We considered all words starting from $b$ or $c$ followed by $a$ repeated at most three times; more $a$ occurrences yield the same transformations.
From the list, we see that the edges form two separated components $\{q_1,q_2,q_5\}$ and $\{q_0,q_3,q_4\}$, so there is no collection of words whose edges form one full cycle.
\end{proof}

\section{Witnesses}\label{sec:Witnesses}

To determine that an automaton $(Q,\Sigma,\delta)$ is not completely reachable, it is enough to find an unreachable nonempty subset $S \subsetneq Q$, which is then a counterexample to the complete reachability of the automaton.
However, the problem of determining whether a given set is (un)reachable is PSPACE-complete \cite{BV2016CompletelyReachableAutomata}.
Therefore, we must put additional restrictions on such sets that can witness the noncomplete reachability of the automaton so that verifying them is computationally easier.

First, consider nonempty subsets $S \subsetneq Q$ that do not admit a properly extending word, or equivalently, that do not have a larger predecessor.
Still, a set can have exponentially many predecessors of the same size.
In fact, the problem of determining this condition is also PSPACE-complete:
\begin{theorem}
The decision problem ``Given an automaton $(Q,\Sigma,\delta)$ and a subset $S \subseteq Q$, does there exist a properly extending word for $S$?''
is PSPACE-complete.
\end{theorem}
\begin{proof}
Several variants of the existence of an extending word have been considered \cite{BFS21PreimageProblems}, but the existing constructions do not fit for the case of a \emph{properly} extending word, hence it requires a new proof.
As this theorem has an auxiliary character for the other results, its proof is given in Appendix.
\end{proof}

The following observation allows inferring the existence of a larger predecessor indirectly.

\begin{lemma}\label{lem:UnionPredecessor}
Let $S, T \subseteq Q$ be distinct.
If a word $u \in \Sigma^*$ is properly extending for $S \cup T$, then $u$ is also properly extending for either $S$ or $T$.
\end{lemma}
\begin{proof}
Let $u$ be a properly extending word for $U = S \cup T \subsetneq Q$, thus there is a larger maximal $u$-predecessor $U'$ of $U$, i.e., $\delta(U',u) = U$, $\delta^{-1}(U, u) = U'$, and $|U'| > |U|$.
By~\Cref{rem:predecessor_nonempty}, we have $|\delta^{-1}(q,u)| \ge 1$ for all $q \in U$.
As this holds for all the states of $S$ and $T$, these sets also have their $u$-predecessors $\delta^{-1}(S,u)$ and $\delta^{-1}(T,u)$, respectively.
Suppose that $|\delta^{-1}(S,u)| = |S|$ and $|\delta^{-1}(T,u)| = |T|$.
Then $|\delta^{-1}(q,u)| = 1$ for all $q \in U$, which gives a contradiction with $|U'| > |U|$.
\end{proof}

It follows that if for a given set $S$, we find two distinct predecessors $T$ and $T'$ such that $T \cup T' \neq Q$, then the existence of a properly extending word for $T \cup T'$, implies that for either $T$ and $T'$, hence also for $S$.
As we show later, finding such a pair of predecessors or verifying its nonexistence can be done effectively.

\begin{remark}\label{rem:DisjointComplements}
For two sets $T, T' \subseteq Q$, the condition $T \cup T' = Q$ is equivalent to $\overline{T} \cap \overline{T'} = \emptyset$ (disjoint complements).
\end{remark}

We define our witness as follows\footnote{This differs from earlier definitions from~\cite{thisICALP}, where it is called a \emph{witness candidate}.}:

\begin{definition}[Witness]\label{def:witness}
A nonempty subset $S \subsetneq Q$ is a \emph{witness} if:
\begin{enumerate}
\item it does not have any larger predecessor, and
\item all its predecessors have pairwise disjoint complements (which are of the same size by~(1)).
\end{enumerate}
A \emph{maximal witness} is a witness of the largest size among all witnesses of the automaton.
\end{definition}

Maximal witnesses are of special interest since they are precisely all the unreachable sets of maximal size and they can be effectively found exhaustively -- our algorithm solving the problem of complete reachability finds a maximal witness and can find all of them.
Let us summarize the properties of a witness.

\begin{corollary}\label{cor:Witness}
The following hold:
\begin{enumerate}
\item A witness is unreachable.
\item All predecessors of a witness are also witnesses.
\item Each unreachable nonempty subset $S \subsetneq Q$ of the largest size is a maximal witness.
\item An automaton is not completely reachable if and only if there exists a witness.
\end{enumerate}
\end{corollary}
\begin{proof}
(1) is immediate by~\Cref{def:witness}~(1).

For~(2), if $T$ is a predecessor of a subset $S$, then all predecessors of $T$ are also predecessors of $S$.
Hence, if $S$ is a witness, then all predecessors of $T$ also satisfy both conditions from~\Cref{def:witness}.

For~(3), let $S \subsetneq Q$ be a nonempty subset of the largest size among all unreachable subsets.
Then $S$ does not admit a properly extending word.
If we take any two distinct predecessors $T$ and $T'$ of $S$, then $T \cup T'$ is larger than $S$.
If they violate \Cref{def:witness}~(2), then $T \cup T \neq Q$, so $T \cup T$ admits a properly extending word, because $S$ is a largest unreachable set.
Then by~\Cref{lem:UnionPredecessor}, this word is also properly extending for $T$ or $T$', so in either case, also $S$ admits a properly extending word, which is a contradiction.
Thus, all pairs of distinct predecessors of $S$ satisfy \Cref{def:witness}~(2), thus $S$ is a witness, and there are no larger witnesses, as they would be larger unreachable sets.

For~(4), a witness is unreachable and nonempty, so if it exists, the automaton is not completely reachable.
Conversely, if there is any unreachable set, then there exists an unreachable set of the largest size, which is a witness by~(3).
\end{proof}

However, there exist automata with unreachable nonempty sets that are not witnesses and with witnesses that are not maximal.
\Cref{fig:examples-notcompletely} shows two nontrivial examples of not completely reachable automata.

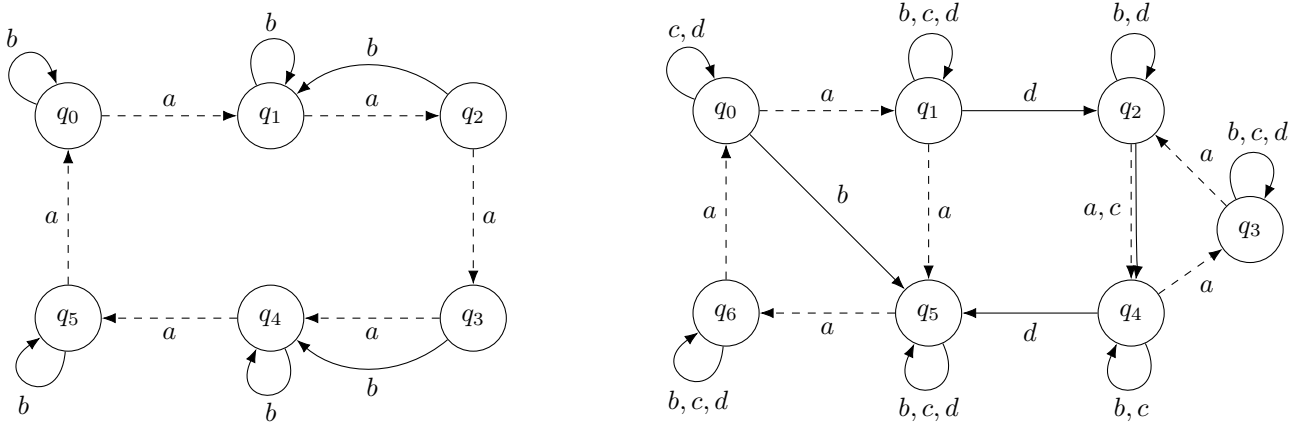
\begin{figure}[htb]\centering
\begin{tikzpicture}[node distance=1.8cm,scale=1,every node/.style={transform shape}]
\tikzset{every loop/.style={min distance=.5cm,looseness=6}}
\node[state] (q0) {$q_0$};
\node[state] [right=of q0] (q1) {$q_1$};
\node[state] [right=of q1] (q2) {$q_2$};
\node[state] [below=of q2] (q3) {$q_3$};
\node[state] [left=of q3] (q4) {$q_4$};
\node[state] [left=of q4] (q5) {$q_5$};
\draw(q0) edge[dashed] node[auto,midway]{$a$} (q1);
\draw(q1) edge[dashed] node[auto,midway]{$a$} (q2);
\draw(q2) edge[dashed] node[auto,midway]{$a$} (q3);
\draw(q3) edge[dashed] node[auto,midway]{$a$} (q4);
\draw(q4) edge[dashed] node[auto,midway]{$a$} (q5);
\draw(q5) edge[dashed] node[auto,midway]{$a$} (q0);
\draw(q0) edge[loop,out=160,in=110,looseness=6] node[auto,pos=.66]{$b$} (q0);
\draw(q1) edge[loop,out=115,in=65,looseness=6] node[auto,midway]{$b$} (q1);
\draw(q2) edge[loop,out=140,in=40,looseness=1] node[above,midway]{$b$} (q1);
\draw(q3) edge[loop,out=-140,in=-40,looseness=1] node[below,midway]{$b$} (q4);
\draw(q4) edge[loop,out=-65,in=-115,looseness=6] node[auto,midway]{$b$} (q4);
\draw(q5) edge[loop,out=-95,in=-145,looseness=6] node[auto,pos=.4]{$b$} (q5);
\end{tikzpicture}
\tikzset{every loop/.style={min distance=.5cm,looseness=6}}
\node[state] (q0) {$q_0$};
\node[state] [right=of q0] (q1) {$q_1$};
\node[state] [right=of q1] (q2) {$q_2$};
\node[state] [below right=.95cm and .95cm of q2] (q3) {$q_3$};
\node[state] [below=of q2] (q4) {$q_4$};
\node[state] [left=of q4] (q5) {$q_5$};
\node[state] [left=of q5] (q6) {$q_6$};

\draw(q0) edge[dashed] node[auto,midway]{$a$} (q1);
\draw(q1) edge[dashed] node[auto,midway]{$a$} (q5);
\draw(q5) edge[dashed] node[auto,midway]{$a$} (q6);
\draw(q6) edge[dashed] node[auto,midway]{$a$} (q0);
\draw(q2) edge[dashed,out=-90,in=90] node[auto,midway]{} (q4);
\draw(q2) edge[out=-82,in=82,looseness=0.01] node[auto,midway,swap,xshift=-.08cm]{$a,c$} (q4);
\draw(q4) edge[dashed,swap] node[auto,midway]{$a$} (q3);
\draw(q3) edge[dashed,swap] node[auto,midway]{$a$} (q2);

\draw(q0) edge node[auto,midway]{$b$} (q5);

\draw(q4) edge node[auto,midway]{$d$} (q5);
\draw(q1) edge node[auto,midway]{$d$} (q2);

\draw(q0) edge[loop,out=160,in=110,looseness=6] node[auto,pos=.68]{$c,d$} (q0);
\draw(q1) edge[loop,out=115,in=65,looseness=6] node[auto,midway]{$b,c,d$} (q1);
\draw(q2) edge[loop,out=115,in=65,looseness=6] node[auto,midway]{$b,d$} (q2);
\draw(q3) edge[loop,out=115,in=65,looseness=6] node[auto,midway,xshift=.1cm]{$b,c,d$} (q3);
\draw(q4) edge[loop,out=-65,in=-115,looseness=6] node[auto,midway]{$b,c$} (q4);
\draw(q5) edge[loop,out=-65,in=-115,looseness=6] node[auto,midway]{$b,c,d$} (q5);
\draw(q6) edge[loop,out=-95,in=-145,looseness=6] node[auto,pos=.38]{$b,c,d$} (q6);
\end{tikzpicture} 
\caption{Two automata that are not completely reachable. The transitions of permutational letters are dashed and those of singular letters are solid.
}\label{fig:examples-notcompletely}
\end{figure}

\begin{example}
The automaton from~\Cref{fig:examples-notcompletely}~(left) is not completely reachable and:
\begin{itemize}
\item All sets of size $5$ are maximal witnesses.
\item The sets (of size $4$) $Q \setminus \{q_i,q_{(i+1) \bmod 6}\}$ for $i \in \{0,\ldots,5\}$ are reachable.
\item The sets (of size $4$) $Q \setminus \{q_i,q_{(i+2) \bmod 6}\}$ for $i \in \{0,\ldots,5\}$ are not properly extensible (only $a^{-1}$ yields a predecessor), but they are not witnesses, since their complements are not pairwise disjoint.
\item The set $\{q_0,q_2,q_4\}$ and its complement $\{q_1,q_3,q_5\}$ are witnesses but not maximal.
\item All singletons are reachable, e.g, $\delta(Q,(ba)^3 b) = \{q_1\}$; the automaton is synchronizing.
\end{itemize}
\end{example}

\begin{example}
The automaton from~\Cref{fig:examples-notcompletely}~(right) is not completely reachable and:
\begin{itemize}
\item All sets of size $6$ are reachable.
\item Maximal witnesses have size $5$ and they are precisely $Q \setminus \{q_0,q_5\}$ and $Q \setminus \{q_1,q_6\}$.
The other sets of size $5$ are reachable.
\item Sets $\{q_0,q_1,q_5,q_6\}$ and $\{q_2,q_3,q_4\}$ are witnesses and their only predecessors are themselves, respectively.
\item All sets of size $2$ and $1$ are reachable; the automaton is synchronizing.
\end{itemize}
\end{example}

\subsection{Verifying a witness}

The following observation will be useful for efficient testing of whether a given set is a witness:
\begin{lemma}\label{lem:WitnessPredecessorCount}
A witness $S \subsetneq Q$ has at most $\lfloor n/(n-|S|)\rfloor$ predecessors (including itself), where $n = |Q|$.
\end{lemma}
\begin{proof}
The complements of the predecessors of $S$ are pairwise disjoint, so each state is contained in at most one complement.
Each of these complements has the same size equal to $n-|S|$, so there are at most $\lfloor n/(n-|S|)\rfloor$ predecessors.
\end{proof}

We build a polynomial procedure that tests whether a given set $S$ is a witness; it is shown in~\Cref{alg:IsWitness}.
Starting from $S$, we process all its predecessors in a breath-first search manner; for this, a FIFO queue $\mathit{Process}$ is used.
The next set $T$ is taken from the queue in line~6.
Then, we verify whether $\overline{T} \cap \overline{T'} \neq \emptyset$, for some previously processed set $T'$, which is the condition from~\Cref{rem:DisjointComplements}.
For this, we maintain $\mathit{Absent}$ array, which for every state $q$ indicates whether $q$ occurred in the complement of some previously processed set, and if so, this set is stored as $\mathit{Absent}[q]$.
We additionally use this array to check whether the same set $T$ has been processed previously (line~9); if so, then it is ignored.
Otherwise, when $T \ne T'$, we know that $S$ is not a witness as the complements of its two predecessors have a nonempty intersection.
We update $\mathit{Absent}$ array in lines~12--13.
Finally, in lines~14--19, we compute and add one-letter predecessors of $T$ to the queue.
If one of these predecessors is larger than $T$, then this immediately implies that $S$ is not a witness.
Since predecessors are never smaller than the set, all processed sets that are put into the queue are of the same size $|S|$.
When all predecessors of $S$ have been considered and neither of the two conditions occurred, the function reports that $S$ is a witness.

\begin{algorithm}[htb!]
\caption{Verifying whether a given subset is a witness from \Cref{def:witness}.}\label{alg:IsWitness}
\begin{algorithmic}[1]
\Input An $n$-state automaton $\mathrsfs{A}=(Q,\Sigma,\delta)$ and a nonempty $S \subsetneq Q$.
\Output \textbf{true} if $S$ is a witness; \textbf{false} otherwise.
\Complexity $\O(|\Sigma|\cdot n^2)$ time (\Cref{lem:IsWitness}); $\O(|\Sigma|\cdot n)$ time if optimized (\Cref{lem:IsWitnessOptimized}).
\Function{IsWitness}{$\mathrsfs{A},S$}
\State $\mathit{Process} \gets \Call{EmptyFifoQueue}{ }$ \Comment{It contains predecessors of $S$ to be processed}
\State $\mathit{Process}.\Call{Push}{S}$
\State $\mathit{Absent} \gets$ Array indexed by $q \in Q$ initialized with \textbf{none}
\While{\textbf{not} $\mathit{Process}.\Call{Empty}{ }$}
  \State $T \gets \mathit{Process}.\Call{Pop}{ }$
  \If{$\mathit{Absent}[q] \neq \none$ for some $q \in \overline{T}$} \Comment{Then $T \cup T' \neq Q$}
    \State $T' \gets \mathit{Absent}[q]$
    \If{$T = T'$}
      \Continue \Comment{$T$ has been processed previously}
    \EndIf
    \State \Return \false \Comment{$T$ and $T'$ are predecessors with nondisjoint complements}
  \EndIf
  \ForAll{$q \in \overline{T}$}
    \State $\mathit{Absent}[q] \gets T$
  \EndFor
  \ForAll{$a \in \Sigma$}
    \If{$T$ has $a$-predecessor}
      \State $T' = \delta^{-1}(T,a)$
      \If{$|T'| > |S|$}
        \State \Return \false
      \EndIf
      \State $\mathit{Process}$.\Call{Push}{$T'$}
    \EndIf
  \EndFor
\EndWhile
\State \Return \true \Comment{All predecessors of $S$ were checked}
\EndFunction
\end{algorithmic}
\end{algorithm}

Here we do a simplified analysis of the time complexity of~\Call{IsWitness}{}.
Then, in the next subsection, we employ a technical optimization that reduces its running time to linear.

\begin{lemma}\label{lem:IsWitness}
Function \Call{IsWitness}{} from~\Cref{alg:IsWitness} is correct and can be implemented to work in $\O(|\Sigma|\cdot n^2)$ time.
\end{lemma}
\begin{proof}
\noindent\emph{Correctness:}
The function can end in line~11, 18, or~20.
The first case (line~11) means that $S$ has two distinct predecessors, $T$ and $T'$ such that $\overline{T} \cap \overline{T'} \neq \emptyset$, which implies that $S$ is not a witness.
The second case (line~18) means that we have found a larger predecessor of $S$, thus $S$ is not a witness.
The last possibility (line~20), where the function ends with a positive answer, occurs when there are no more predecessors of $S$ to consider ($\mathit{Process}$ becomes empty), so all predecessors of $S$ have been checked and the two previous cases have not occurred.
Thus, $S$ satisfies the conditions from~\Cref{def:witness}.

\noindent\emph{Running time:}
We separately consider the time complexity of two types of iterations of the main loop: \emph{full} and \emph{extra} iterations.
An \emph{extra} iteration is where the case from line~7 holds; then the iteration ends either in line~10 or~11.
Otherwise, we count it as a \emph{full} iteration.

In each iteration, $T$ has size $|S|$, because a predecessor cannot be smaller than the set and we check in line~17 if we encounter a larger one.
The number of distinct processed predecessors (including $S$ itself) of the same size $|S|<n$ with pairwise disjoint complements is at most $\lfloor n / (n-|S|)\rfloor$ by~\Cref{lem:WitnessPredecessorCount}.
Hence, this bounds the number of full iterations.
An extra iteration is either the one ending the algorithm or one with an already processed set $T$.
As these repeated sets must have been added previously in full iterations, and one full iteration adds at most $|\Sigma|$ sets to the queue, the number of extra iterations is at most $|\Sigma|\cdot \lfloor n / (n-|S|)\rfloor$.

A full iteration can be performed in $\O(|\Sigma|\cdot n)$ time if just in $\mathit{Absent}$ we store a pointer/reference to $T$ (line~13 in constant time) instead of copying the subset.
An extra iteration can be trivially performed in $\O(n)$ time.

Summing the total cost of full and extra iterations, we get the upper bound:
\[ \lfloor n / (n-|S|)\rfloor \cdot \O(|\Sigma|\cdot n) + |\Sigma|\cdot \lfloor n / (n-|S|)\rfloor \cdot \O(n) = \O(|\Sigma|\cdot n^2) .\]
\end{proof}

Verifying a witness in polynomial time already leads to an algorithm in co-NP solving the main problem.

\begin{corollary}
Problem \textsc{Completely Reachable} can be solved in co-NP.
\end{corollary}
\begin{proof}
To certify that a given automaton $\mathrsfs{A}$ is not completely reachable, we guess a witness $S \subsetneq Q$ and call $\Call{IsWitness}{\mathrsfs{A},S}$ to verify it.
If the automaton is not completely reachable, then there exists some witness.
Otherwise, there are no witnesses, thus the function returns {\false} regardless of the chosen $S$.
\end{proof}

\subsection{Optimization of the representation of sets}

We employ a technical optimization of altering the representation of sets.
When applied to the function verifying a witness, we achieve linear time, which is optimal.
The described optimization is also crucial for our polynomial algorithm (\Cref{sec:PolynomialTimeAlgorithm}) solving the main problem.
It also decreases the running time by $\O(n)$; without it, the running time would remain cubic in $n$.

We replace the usual array-based representation (the characteristic vector) of a set with a list of states in its complement.
This reduces the time complexity because the algorithms spend most of the time processing large sets, and their sizes are correlated with the maximum number of iterations.
If a subset $T$ is in the form of a list of states in its complement, we can do an operation on $T$ in $\O(n-|T|)$ time instead of $\O(n)$.
Yet, this requires some preprocessing to compute predecessors effectively, i.e., computing $\delta^{-1}(T,a)$ and checking if it is a predecessor of $T$.
This preprocessing must be done once at the beginning of the algorithm.

\begin{lemma}\label{lem:ComplementList}
For an automaton $(Q,\Sigma,\delta)$, after a preprocessing performed in $\O(|\Sigma|\cdot n)$ time and space, if a set $T$ is stored in the form of a list of states in its complement $\overline{T}$, then for a letter $a \in \Sigma$, we can check if $\delta^{-1}(T,a)$ is $a$-predecessor and, if it is, store this preimage in same form in $\O(n-|T|)$ time.
\end{lemma}
\begin{proof}
The preprocessing is as follows.
For each letter $a \in \Sigma$ and each state $q \in Q$, we compute the list $P_{q,a}$ containing the states from $\delta^{-1}(q,a)$.
Note that for the same letter $a$, these lists are pairwise disjoint and sum up to $Q$ over $q \in Q$.
Additionally, for each $a \in \Sigma$, we also compute its rank $|\delta(Q,a)|$ stored as $\mathit{rank}(a)$.
Note that $n-\mathit{rank}(a)$ is the number of states $q$ such that $\delta^{-1}(q,a)| = 0$.
This preprocessing is easily doable in $\O(|\Sigma|\cdot n)$ time and space.

Now, suppose that we are given to check and compute the preimage $\delta^{-1}(T,a)$, where $T$ is given in the form of a list of states in $\overline{T}$.
First, we count the states $q \in \overline{T}$ with an empty preimage, i.e., such that $|\delta^{-1}(q,a)| = 0$; for this, we look whether the preprocessed $P_{q,a}$ is empty, thus this takes $\O(n-|T|)$ time.
The preimage is $a$-predecessor of $T$ if and only if the number of the states with an empty preimage equals $n-\mathit{rank}(a)$, because if it is smaller, then some other state $p \in T$ has an empty preimage $\delta^{-1}(p,a)$.
Second, if the test passes, we build a list of states by concatenating all the lists $P_{q,a}$ for $q \in \overline{T}$, and the resulting list represents $\delta^{-1}(\overline{T},a) = \overline{\delta^{-1}(T,a)}$.
Since a predecessor is never smaller than the set, hence its complement is never larger than the complement of the set, the length of our list is at most $|\overline{T}| = n-|T|$, so we can do this computation in $\O(n-|T|)$ time.
\end{proof}

\begin{lemma}\label{lem:IsWitnessOptimized}
Function~\Call{IsWitness}{} from~\Cref{alg:IsWitness} can be implemented to work in $\O(|\Sigma|\cdot n)$ time and space.
\end{lemma}
\begin{proof}
\noindent\emph{Running time}:
At the beginning of the function, we run the processing from~\Cref{lem:ComplementList}.
The initialization in lines~2--4 is performed in $\O(n)$ time.

We store the sets in $\mathit{Process}$ and also $T$ and $T'$ in the form of a list of states in their complements.
Recall from the proof of~\Cref{lem:IsWitness} that the number of full iterations (those where the condition from line~7 does not hold) is $\le \lfloor n / (n-|S|)\rfloor$ and the number of extra iterations (the condition from line~7 holds) is $\le |\Sigma|\cdot\lfloor n/(n-|S|)\rfloor$.
Lines~5--13 can be performed in $\O(n-|S|)=\O(n-|T|)$ time if $\mathit{Absent}$ stores references/pointers to set representations; note that in lines~7 and~12 we iterate over $q \in \overline{T}$, and each access to $\mathit{Absent}[q]$ is done in constant time.
Using the optimization from~\Cref{lem:ComplementList}, also lines~15 and~16 can be performed in $\O(n-|S|)$ time.
Lines~17--19 are trivially performed in constant time.
Therefore, a full iteration takes $\O(|\Sigma|\cdot (n-|S|))$ time and an extra iterations takes $\O(n-|S|)$ time.
Altogether, we get the time complexity:
\[ \O(|\Sigma|\cdot n) + \lfloor n / (n-|S|)\rfloor \cdot \O(|\Sigma|\cdot (n-|S|)) + |\Sigma|\cdot \lfloor n / (n-|S|)\rfloor \cdot \O(n-|S|) = \O(|\Sigma|\cdot n). \]

\noindent\emph{Running space}: We note that apart from the automaton and the preprocessed structures, which require $\O(|\Sigma|\cdot n)$ space, the space used by the function fits in $\O(n)$ space because the sum of the lengths of all the lists stored in $\mathit{Process}$ does not exceed $n$.
\end{proof}

The running time of~\Call{IsWitness}{} is now optimal, because, in the worst case, any algorithm must read all transitions of the automaton:

\begin{proposition}\label{pro:IsWitnessOptimal}
$\Omega(|\Sigma|\cdot n)$ is a lower bound on the worst-case time of verifying whether a given set is a witness in a given automaton.
\end{proposition}
\begin{proof}
We construct an adversarial family showing that any correct algorithm must check all the transitions of the given automaton.

First, consider an automaton $\mathrsfs{A}$ with the set of states $Q = \{q_0,q_1,\ldots,q_{n-1}\}$ and over the alphabet $\Sigma = \{a_0,a_1,\ldots,a_{|\Sigma|-1}\}$; we assume $|\Sigma| \ge 2$.
Let each letter $a_j \in \Sigma$ act as the same cyclic permutation: $\delta(q_i,a_j) = q_{(i+1) \bmod n}$ for all $q_i$.
Let $S = Q \setminus \{q_0\}$, which is a witness in $\mathrsfs{A}$.
Now, for each $0 \le i \le n-1$ and $0 \le j \le |\Sigma|-1$, we define $\mathrsfs{A}_{i,j}$ to be the automaton $\mathrsfs{A}$ with one transition modified: $\delta(q_i,a_j) = q_i$.
Then in each $\mathrsfs{A}_{i,j}$, $S$ can be properly extended to $Q$, since $\delta(Q,a_j) = Q \setminus \{q_{(i+n-1) \bmod n}\}$ and we can reach $S$ by the cyclic permutation of any another letter.
Hence, automaton $\mathrsfs{A}$ together with automata $\mathrsfs{A}_{i,j}$ form an adversarial family, and every algorithm verifying whether $S$ is a witness, in the worst case, must check all the transitions to distinguish $\mathrsfs{A}$ from the others. 
\end{proof}

\subsection{Reduction for witness containment}

To build a polynomial algorithm to solve the main problem, we will need to find a witness.
In this task, we will rely on \emph{reduction for witness containment}.
Suppose that we are given a subset $S \subsetneq Q$ and want to find a witness that is contained in $S$.
If we know a properly extending word $w$ for $S$, then we can reduce $S$ to its proper subset by removing some states which surely cannot be in any witness contained in $S$.

\begin{lemma}\label{lem:ReductionByPredecessor}
Let $S \subsetneq Q$ and $\delta^{-1}(S,w)$ be a $w$-predecessor of $S$ for some word $w$.
Then for every witness $S' \subseteq S$, every state $q \in S'$ is such that $|\delta^{-1}(q,w)| = 1$.
\end{lemma}
\begin{proof}
Since $\delta^{-1}(S,w)$ is a $w$-predecessor of $S$, by~\Cref{rem:predecessor_nonempty}, we have $\delta^{-1}(q,w) \ne \emptyset$ for every state $q \in S$.
Suppose for a contradiction that there exists a witness $S' \subseteq S$ that contains a state $p \in S'$ such that $|\delta^{-1}(p,w)| > 1$.
Note that the set $\delta^{-1}(S',w)$ is a $w$-predecessor of $S'$ since its superset $S$ has a $w$-predecessor, and we have
\[ |\delta^{-1}(S',w)| = |\delta^{-1}(p,w)| + \sum_{s \in S \setminus \{p\}} |\delta^{-1}(s,w)| > 1+(|S|-1) = |S|. \]
Thus, $S'$ cannot be a witness, since it has a larger predecessor.
\end{proof}

It follows that to reduce $S$ for witness containment, we need to find a properly extending word for $S$.
Then we can remove at least one state from $S$ so that all the witnesses in $S$ are still contained in the resulting smaller subset.

\section{A polynomial-time algorithm}\label{sec:PolynomialTimeAlgorithm}

\newcommand{\vertexOf}{\Call{vertexOf}{}_\mathcal{R}}
\newcommand{\nextVertex}{\Call{nextVertex}{}_\mathcal{R}}
\newcommand{\scopeOf}{\Call{scopeOf}{}_\mathcal{R}}
\newcommand{\setOf}{\Call{setOf}{}_\mathcal{R}}

We describe the overall idea of the algorithm.
The next subsections provide a detailed technical description and proofs.

To search for a witness, we maintain a structure storing information about the possible existence of witnesses.
This structure is a \emph{reduction graph} and we denote it by $\mathcal{R}$.
It allows inferring reductions in the following way: if a given subset $S$ does not contain a state $q$, then we can find a proper subset $S' \subsetneq S$ such that all witnesses contained in $S$ are also contained in $S'$.
Thus, $S$ is reduced to $S'$ for witness containment.
The reduction graph allows the effective use of implications of all reductions computed so far.

The vertices of the reduction graph are pairwise disjoint subsets of $Q$ that altogether sum up to $Q$.
Hence, for each state $q \in Q$, there is a unique vertex with the set containing $q$; we denote it by $\vertexOf(q)$.

The existence of a vertex $V \subseteq Q$ denotes that if a subset $S$ does not contain a state from $V$ (one missing state is sufficient), then we can reduce $S$ for witness containment by removing all states in $V$.
An edge $V_1 \to V_2$ means that if a subset $S$ does not contain a state from $V_1$, then we can reduce $S$ for witness containment by removing all states in $V_2$, thus all in $V_1 \cup V_2$.
This relation is transitive but may be asymmetric, hence the edges are directed.

The reduction graph is always acyclic and every vertex has the out-degree either $0$ or $1$.
The vertices without an outgoing edge are \emph{leaves}, and the other have exactly one outgoing edge.

In the beginning, the reduction graph has $n$ disconnected singleton vertices.
Whenever we compute a new reduction, we add a suitable edge to the reduction graph; the outgoing edges are added only to leaves.
If this produces a cycle, then we replace the cycle with one vertex which is the union of the vertices in the cycle.
In this way, we keep the property that the reduction graph is a forest with the maximum out-degree $1$.
This makes it small (linear), avoiding the need for a quadratic number of edges, and trivial to traverse.

From the reduction graph, we can also extract the next subset to reduce.
The leaves of the smallest size are called \emph{minimal leaves}.
We choose a minimal leaf $V$ and consider its complement set $S = \overline{V}$.
We search for a reduction for $S$ by applying the inverse action of all letters and looking at the resulting preimages.
Consider a preimage $T = \delta^{-1}(S,a)$ for a latter $a \in \Sigma$ such that $T$ is $a$-predecessor of $S$.
There are three cases:
\begin{enumerate}
\item If $|T| > |S|$, then we have found a one-letter properly extending word.
Hence, by~\Cref{lem:ReductionByPredecessor}, we can reduce $S$ for witness containment by a state $r \in S$ such that $|\delta^{-1}(r,a)|>1$, which is also effectively computed (in linear time) as the word has only one letter.
As $r \notin \overline{S}$, it belongs to a different vertex than $V$ in the reduction graph.
Hence we can add a new edge to the reduction graph, which goes from $V$ to that vertex containing $q$.
\item If $|T| = |S|$ and $\overline{T}$ is not a minimal leaf, then we can find a reduction indirectly using the reduction graph.
We do this by choosing any state $q \in \overline{T}$ and following the path from $\vertexOf(q)$ up to a vertex containing any state $p \in T$.
Since $p$ is in a vertex reachable from $\vertexOf(q)$ and $q \in \overline{T}$, based on the reduction graph, we can reduce $T$ for witness containment by removing $p$.

Now, we need to remap the reduction of $T$ by $p$ to a reduction of $S$.
We know that we could apply $a^{-1}$ to every witness contained in $S$, and every predecessor of a witness is also a witness (\Cref{cor:Witness}~(2)).
It follows that if there exists a witness $W \subseteq S$ that contains the state $\delta(p,a)$, then also its predecessor $\delta^{-1}(W,a)$ must be a witness which contains $p$.
However, we already know from the reduction graph that every witness contained in $T$, thus in particular $\delta^{-1}(W,a) \subseteq T$, does not contain $p$.
Hence, we conclude that every witness contained in $S$ does not contain $\delta(p,a)$.
We add a new edge from $V$ to $\vertexOf(\delta(p,a))$.

Furthermore, in this case, it is important that the chosen vertex $V$ is a \emph{minimal} leaf.
Then $\overline{T}$ has the same size and is not a minimal leaf, which guarantees that from any vertex, we will eventually find a state $p \in T$.
\item If $\overline{T}$ is a minimal leaf, then we skip it and continue with the other letters.
\end{enumerate}
The third case may hold for all considered one-letter predecessor preimages $T$ of all $S = \overline{V}$, where $V$ is a minimal leaf.
Then this means that all these subsets $S$ are witnesses.

\begin{figure}[htb!]
\centering\begin{tikzpicture}[node distance=1.8cm,scale=1,every node/.style={transform shape}]
\tikzset{every loop/.style={min distance=.5cm,looseness=6}}
\node[state] (q0) {$q_0$};
\node[state] [right=of q0] (q1) {$q_1$};
\node[state] [right=of q1] (q2) {$q_2$};
\node[state] [below right=.95cm and .95cm of q2] (q3) {$q_3$};
\node[state] [below=of q2] (q4) {$q_4$};
\node[state] [left=of q4] (q5) {$q_5$};
\node[state] [left=of q5] (q6) {$q_6$};

\draw(q0) edge[dashed] node[auto,midway]{$a$} (q1);
\draw(q1) edge[dashed] node[auto,midway]{$a$} (q5);
\draw(q5) edge[dashed] node[auto,midway]{$a$} (q6);
\draw(q6) edge[dashed] node[auto,midway]{$a$} (q0);
\draw(q2) edge[dashed,out=-90,in=90] node[auto,midway]{} (q4);
\draw(q2) edge[out=-82,in=82,looseness=0.01] node[auto,midway,swap,xshift=-.08cm]{$a,c$} (q4);
\draw(q4) edge[dashed,swap] node[auto,midway]{$a$} (q3);
\draw(q3) edge[dashed,swap] node[auto,midway]{$a$} (q2);

\draw(q0) edge node[auto,midway]{$b$} (q5);

\draw(q4) edge node[auto,midway]{$d$} (q5);
\draw(q1) edge node[auto,midway]{$d$} (q2);

\draw(q0) edge[loop,out=160,in=110,looseness=6] node[auto,pos=.68]{$c,d$} (q0);
\draw(q1) edge[loop,out=115,in=65,looseness=6] node[auto,midway]{$b,c,d$} (q1);
\draw(q2) edge[loop,out=115,in=65,looseness=6] node[auto,midway]{$b,d$} (q2);
\draw(q3) edge[loop,out=115,in=65,looseness=6] node[auto,midway,xshift=.1cm]{$b,c,d$} (q3);
\draw(q4) edge[loop,out=-65,in=-115,looseness=6] node[auto,midway]{$b,c$} (q4);
\draw(q5) edge[loop,out=-65,in=-115,looseness=6] node[auto,midway]{$b,c,d$} (q5);
\draw(q6) edge[loop,out=-95,in=-145,looseness=6] node[auto,pos=.38]{$b,c,d$} (q6);
\end{tikzpicture} \\
\tikzset{node distance=1.2cm,every loop/.style={min distance=.5cm,looseness=6}}%
\begin{tikzpicture}[scale=.95,every node/.style={transform shape}]
\node[state] (q0) {$q_0$};
\node[state] [right=of q0] (q1) {$q_1$};
\node[state] [right=of q1] (q2) {$q_2$};
\node[state] [below right=.5cm and .8cm of q2] (q3) {$q_3$};
\node[state] [below=of q2] (q4) {$q_4$};
\node[state] [left=of q4] (q5) {$q_5$};
\node[state] [left=of q5] (q6) {$q_6$};
\draw(q0) edge[dashed] (q5);
\node [above right=.3cm and -1cm of q0](label) {{\Large(1)}\quad $Q\setminus \{q_0\} \xrightarrow{b^{-1}} Q$; reduction by $q_5$};
\end{tikzpicture}\hfill
\begin{tikzpicture}[scale=.95,every node/.style={transform shape}]
\node[state] (q0) {$q_0$};
\node[state] [right=of q0] (q1) {$q_1$};
\node[state] [right=of q1] (q2) {$q_2$};
\node[state] [below right=.5cm and .8cm of q2] (q3) {$q_3$};
\node[state] [below=of q2] (q4) {$q_4$};
\node[state] [left=of q4] (q5) {$q_5$};
\node[state] [left=of q5] (q6) {$q_6$};
\draw(q0) edge (q5);
\draw(q1) edge[dashed] (q6);
\node[above right=.3cm and -1cm of q0](label) {{\Large(2)}\quad $Q\setminus \{q_1\} \xrightarrow{a^{-1}} Q \setminus \{q_0\}$; reduction by $\delta(q_5,a) = q_6$};
\end{tikzpicture}\\\vspace{.5cm}
\begin{tikzpicture}[scale=.95,every node/.style={transform shape}]
\node[state] (q0) {$q_0$};
\node[state] [right=of q0] (q1) {$q_1$};
\node[state] [right=of q1] (q2) {$q_2$};
\node[state] [below right=.5cm and .8cm of q2] (q3) {$q_3$};
\node[state] [below=of q2] (q4) {$q_4$};
\node[state] [left=of q4] (q5) {$q_5$};
\node[state] [left=of q5] (q6) {$q_6$};
\draw(q0) edge (q5);
\draw(q1) edge (q6);
\draw[dashed](q2) edge (q4);
\node[above right=.3cm and -1cm of q0](label) {{\Large(3)}\quad $Q\setminus \{q_2\} \xrightarrow{c^{-1}} Q$; reduction by $q_4$};
\end{tikzpicture}\hfill
\begin{tikzpicture}[scale=.95,every node/.style={transform shape}]
\node[state] (q0) {$q_0$};
\node[state] [right=of q0] (q1) {$q_1$};
\node[state] [right=of q1] (q2) {$q_2$};
\node[state] [below right=.5cm and .8cm of q2] (q3) {$q_3$};
\node[state] [below=of q2] (q4) {$q_4$};
\node[state] [left=of q4] (q5) {$q_5$};
\node[state] [left=of q5] (q6) {$q_6$};
\draw(q0) edge (q5);
\draw(q1) edge (q6);
\draw(q2) edge (q4);
\draw[dashed](q4) edge (q3);
\node[above right=.3cm and -1cm of q0](label) {{\Large(4)}\quad $Q\setminus \{q_4\} \xrightarrow{a^{-1}} Q \setminus \{q_2\}$; reduction by $\delta(q_4,a) = q_3$};
\end{tikzpicture}\\\vspace{.5cm}
\begin{tikzpicture}[scale=.95,every node/.style={transform shape}]
\node[state] (q0) {$q_0$};
\node[state] [right=of q0] (q1) {$q_1$};
\node[state,dashed] [below right=.25cm and 1.5cm of q1] (q2q3q4) {$q_2,q_3,q_4$};
\node[state] [left=of q4] (q5) {$q_5$};
\node[state] [left=of q5] (q6) {$q_6$};
\draw(q0) edge (q5);
\draw(q1) edge (q6);
\node[above right=.3cm and -1cm of q0](label) {{\Large(5)}\quad $Q\setminus \{q_3\} \xrightarrow{a^{-1}} Q \setminus \{q_4\}$; reduction by $\delta(q_3,a) = q_2$};
\end{tikzpicture}\hfill
\begin{tikzpicture}[scale=.95,every node/.style={transform shape}]
\node[state,dashed] (q0q5) {$q_0,q_5$};
\node[state] [right=of q0] (q1) {$q_1$};
\node[state] [below right=.25cm and 1.5cm of q1] (q2q3q4) {$q_2,q_3,q_4$};
\node[state] [left=of q5] (q6) {$q_6$};
\draw(q1) edge (q6);
\node[above right=.3cm and -1cm of q0](label) {{\Large(6)}\quad $Q\setminus \{q_5\} \xrightarrow{a^{-1}} Q \setminus \{q_1\}$; reduction by $\delta(q_6,a)=q_0$};
\end{tikzpicture}\\\vspace{.25cm}
\begin{tikzpicture}[scale=.95,every node/.style={transform shape}]
\node[state] (q0q5) {$q_0,q_5$};
\node[state,dashed] [right=of q0] (q1q6) {$q_1,q_6$};
\node[state] [right=1.5cm of q1] (q2q3q4) {$q_2,q_3,q_4$};
\node[above=.3cm of q1q6](label) {{\Large(7)}\quad $Q\setminus \{q_6\} \xrightarrow{a^{-1}} Q \setminus \{q_5\}$; reduction by $\delta(q_0,a)=q_1$};
\end{tikzpicture}
\caption{A possible sequence of computed reduction graphs for the not completely reachable automaton from~\Cref{fig:examples-notcompletely}~(right) (also shown above). In each step, the dashed line denotes the added edge or the dashed circle denotes the unified vertex.}\label{fig:ExampleReductionGraph}
\end{figure}
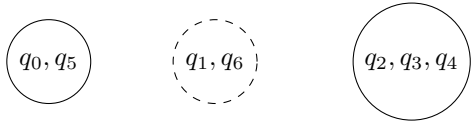

\begin{example}
\Cref{fig:ExampleReductionGraph} shows a possible sequence of reductions and the reduction graph after each iteration.
This also depicts a possible execution of the algorithm.
In the example, the minimal leaves and the letters to apply are chosen in lexicographical order.

In~(1), we pick leaf $\{q_0\}$ thus run with subset $Q \setminus \{q_0\}$.
Applying $a^{-1}$ yields $Q \setminus \{q_6\}$, which is the complement of another minimal leaf, hence we ignore it.
Next, we apply $b^{-1}$ and our set is properly extended to $Q$.
W have $|\delta^{-1}(q_5,b)| > 1$ and conclude by~\Cref{lem:ReductionByPredecessor} that every witness not containing $q_0$ also does not contain $q_5$.
Hence, we add an edge from $\vertexOf(q_0)$ to $\vertexOf(q_5)$.

In~(2), we pick subset $Q \setminus \{q_1\}$ and find out that by applying $a^{-1}$, we obtain $Q \setminus \{q_0\}$.
Now $\{q_0\}$ is not a minimal leaf, because of the edge added in~(1).
Hence, we can infer a reduction for $Q \setminus \{q_0\}$ from the reduction graph by following the path starting from $\vertexOf(q_0)$ to a vertex containing a state from $Q \setminus \{q_0\}$.
This reduction is $q_5$.
Then, we map back $q_5$ by the action of $a$, obtaining $q_6$, which is the state to remove from our original $Q \setminus \{q_1\}$.
We add an edge from $\vertexOf(q_1)$ to $\vertexOf(q_6)$.

In~(3), the situation is analogous to that in~(1) -- letter $c$ is properly extending for $Q \setminus \{q_2\}$, and we have $|\delta^{-1}(q_4,c)| > 1$.

In~(4), we first try reducing $Q \setminus \{q_3\}$.
However, $a^{-1}$ yields $Q \setminus \{q_4\}$ and the inverse actions of the three other letters yield the same set $Q \setminus \{q_3\}$.
They are complements of minimal leaves, so no reduction is found.
We try the next set $Q \setminus \{q_4\}$, and from this one, we get $Q \setminus \{q_2\}$ by $a^{-1}$, which can be reduced by $q_4$ from the reduction graph.
After mapping $q_4$ back by the action of $a$, we add an edge from $\vertexOf(q_4)$ to $\vertexOf(\delta(q_4,a)) = \vertexOf(q_3)$.

In~(5), we reduce $Q \setminus \{q_3\}$.
By $a^{-1}$ we get $Q \setminus \{q_4\}$, which is reduced by $q_3$ from the reduction graph.
We can remove $\delta(q_3,a) = q_2$ from $Q \setminus \{q_3\}$.
By adding an edge from $\vertexOf(q_3)$ to $\vertexOf(q_2)$, we get a cycle.
Hence, we replace its vertices with one with the union $\{q_2,q_3,q_4\}$ of their sets; it is a leaf, but not a minimal one.

In~(6), we pick $Q \setminus \{q_5\}$.
By applying $a^{-1}$, we get $Q \setminus \{q_1\}$, which is reduced by $q_6$ from the reduction graph.
Hence, the new edge is from $\vertexOf(q_5)$ to $\vertexOf(\delta(q_6,a)) = \vertexOf(q_0)$.
This results in a cycle with two vertices, which are replaced with the union $\{q_0,q_5\}$.

In~(7), we pick $Q \setminus \{q_6\}$, which is the unique minimal leaf.
By applying $a^{-1}$, we get $Q \setminus \{q_5\}$.
Now, we do not have an outgoing edge from $\vertexOf(q_5)$, but this vertex contains also $q_0 \in Q \setminus \{q_5\}$, thus we can reduce by $q_0$.
Hence, the new edge is from $\vertexOf(q_6)$ to $\vertexOf(\delta(q_0,a)) = \vertexOf(q_1)$.
As in the previous step, the cycle is replaced with the union $\{q_1,q_2\}$.

Finally, we end up with a reduction graph with three vertices, and the complements of these vertices cannot be reduced -- they are witnesses.
We know that by trying the inverse action of every letter, which either does not give a predecessor or yields the complement a minimal leaf, e.g., $\delta^{-1}(\{q_0,q_5\},a) = \{q_1,q_6\}$.
\end{example}

\begin{remark}
Our reduction graphs are similar to \emph{generalized Rystsov graphs} studied previously \cite{BCV2023CompletelyReachableInterplay}.
The latter are graphs on pairwise disjoint subsets of $Q$ that sum up to $Q$, where an edge $C \to C'$ denotes that there exists a properly extending word $w$ for $\overline{C}$ such that $|\delta^{-1}(p,w)|>1$ for some $p \in C'$. 
This implies that $Q \setminus C$ can be reduced for witness containment to $Q \setminus (C \cup C')$, hence they would be reduction graphs without the restriction that the out-degree is at most $1$.

However, their construction strategy is different.
They are defined in steps, where, in each step, all possible edges are added and then the condensation graph is taken (replacing strongly connected components with the unions of their vertices).
There is no known polynomial-time algorithm for building these graphs due to the difficulty of considering all words of a certain rank.

In contrast, for our reduction graphs, in every step, we add just one edge and always from a leaf, keeping the maximum out-degree $1$.
Further choosing the leaves to be minimal ones lets us use the current reduction graph to effectively find a new edge to add.
\end{remark}

\subsection{Formal and technical definition of a reduction graph}

For an automaton $\mathrsfs{A}=(Q,\Sigma,\delta)$, a \emph{reduction graph} is a directed forest $\mathcal{R}=(\mathcal{V}_\mathcal{R},\mathcal{E}_\mathcal{R})$, where the vertices in $\mathcal{V}_\mathcal{R}$ are pairwise disjoint subsets of $Q$ that altogether sum up to $Q$, and each vertex has out-degree either $0$ or $1$.
Hence, for each state $q \in Q$, there is a unique vertex containing it; we denote it by $\vertexOf(q)$.
A vertex without an outgoing edge is a \emph{leaf}.
Since the reduction graph is acyclic, it has at least one leaf, and from every vertex, there is exactly one leaf reachable.
For a vertex $V$, by $\nextVertex(V)$ we denote either the unique vertex $V'$ such that there is an edge from $V$ to $V'$, or {\none} if $V$ is a leaf.
The \emph{scope} of a vertex $V$ is the union of all the vertices reachable from $V$ up to the leaf.

The reduction graph is \emph{valid} (for the automaton) if, for every state $q \in Q$, every witness $W \subseteq Q \setminus \{q\}$ is also contained in $Q \setminus \scopeOf(\vertexOf(q))$.
Leaves of the smallest size in the reduction graph are of special interest; we call them \emph{minimal leaves}.

Finding the vertex of a state can be performed in $\O(1)$ time if together with the reduction graph we maintain an auxiliary array map $Q \to \mathcal{V}_\mathcal{R}$.
Technically, vertices are maintained through identifiers.
By maintaining another array map, for each vertex, we store the list of states in it, so we can access the list in $\O(1)$ time.

The space used by the reduction graph is in $\O(n)$:
We store an array map $Q \to V_\mathcal{R}$ for computing $\vertexOf$.
For each vertex, we also store a list of states -- their lengths altogether sum up to $n$.
Finally, for each vertex, we store one optional outgoing edge.

\subsection{Reduction retrieval from the reduction graph}

Given a subset $T \subsetneq Q$ and a valid reduction graph $\mathcal{R}$, we can effectively find a reduction of $T$ for witness containment, under a certain additional requirement.
To do this, we pick up any state $q \in \overline{T}$ and start searching from the vertex $V = \vertexOf(q)$.
The goal is to find any state $p \in \scopeOf(V) \cap T$, which can be removed from $T$ because we know from the reduction graph that all witnesses contained in $T \subseteq Q \setminus \{q\}$ are also contained in $T \setminus \scopeOf(V)$.
Hence, the requirement is that $\scopeOf(V) \cap T$ is nonempty.
Finding just one state $p$ to remove can be done fast (in linear time) by traversing the reduction graph, and it will be enough for our algorithm because $p$ after remapping will serve as a representant of a vertex to which we need to add an edge.

\Cref{alg:GetImpliedReduction} shows function \Call{GetImpliedReduction}{}.
It just follows the path starting from an initially chosen vertex $V$ and checks all the states in the lists of vertices.
Here, we need $T$ given in the array form to check the state membership in constant time.

\begin{algorithm}[htb]
\caption{Fast reduction retrieval for witness containment using the reduction graph.}\label{alg:GetImpliedReduction}
\begin{algorithmic}[1]
\Input A reduction graph $\mathcal{R}$ for an automaton $\mathrsfs{A}$, and a nonempty set $T \subsetneq Q$ (given in the array form).
\Require $\mathcal{R}$ is valid for $\mathrsfs{A}$ and $T \cap \scopeOf(V)$ is nonempty for every vertex $V$.
\Output A state $p \in T$.
\Ensure Every witness of $\mathrsfs{A}$ contained in $T$ is also contained in $T \setminus \{p\}$.
\Complexity $\O(n)$ time and $\O(1)$ working space (\Cref{lem:GetImpliedReduction}).
\Function{GetImpliedReduction}{$\mathcal{R}$, $T$}
\State $q \gets$ any state from $\overline{T}$
\State $V \gets \vertexOf(q)$
\While{\textbf{true}}
\ForAll{$p \in V$}
  \If{$p \in T$}
    \State \Return $p$
  \EndIf
\EndFor
\State $V \gets \nextVertex(V)$
\State \Assert{$V \ne \none$} \Comment{Cannot happen due to the requirement}
\EndWhile
\EndFunction
\end{algorithmic}
\end{algorithm}

\begin{lemma}\label{lem:GetImpliedReduction}
\Call{GetImpliedReduction}{} from~\Cref{alg:GetImpliedReduction} is correct and works in $\O(n)$ time and $\O(1)$ working space (not counting the arguments).
\end{lemma}
\begin{proof}
\noindent\textit{Correctness}:
By the requirement, $T \cap \scopeOf(V)$ is nonempty in particular for the initially chosen vertex $V$.
The function subsequently checks all states $p \in \scopeOf(V)$, hence it finally finds a state $p$ such that $p \in T$.
Since $\mathcal{R}$ is valid, all witnesses contained in $T \subseteq Q \setminus \{q\}$ are also contained in $T \cap (Q \setminus \{q,p\})$ thus in $T \setminus \{p\}$.

\noindent\textit{Running time}:
Line~2 works in $\O(n)$ time and line~3 works in $\O(1)$ time.
The main \textbf{while} loop works in $\O(n-|T|)$ time as follows.
Iteration over all $p \in V$ takes $\O(|V|)$ time:
taking a next state $p$ from $V$ works in constant time by traversing its list of states; the condition $p \in T$ in line~6 is checked in $\O(1)$ time by an array lookup to $T$.
Line~8 works in $\O(1)$ time by using the array map storing outgoing edges.
Since the vertices are disjoint and nonempty, all iterations of the main loop take at most $\O(n)$ time.

\noindent\textit{Running space}:
Not counting the arguments, we need to store only $r$, $V$ (identifier), and $p$ together with an index for iteration over the list of states of $V$.
These fit in $\O(1)$ working space.
\end{proof}

\subsection{Finding a reduction with the help of the reduction graph}

\Cref{alg:FindReduction} shows function \Call{FindReduction}{}, which is responsible for finding a new reduction for the complement set of a leaf.
For the complements of leaves, a reduction cannot be obtained directly from the reduction graph, because the scope of a leaf is the leaf itself thus it is disjoint from its complement.

The function takes a minimal leaf $V$ and searches for a reduction for the set $\overline{V}$.
It does not need a queue (as \Call{IsWitness}{} from~\Cref{alg:IsWitness}) and only considers one application of the inverse action of every letter.
If a larger predecessor is found, then the reduction is directly derived by~\Cref{lem:ReductionByPredecessor} since the letter is a properly extending word.
The second case is to obtain a predecessor $T$ that is not the complement of a minimal leaf.
Then, since also $n-|T|$ is the size of minimal leaves, we have the guarantee that for every vertex $V'$, $T \cap \scopeOf(V')$ is nonempty, hence we can use \Call{GetImpliedReduction}{}.

\begin{algorithm}[htb!]
\caption{Finding a reduction for witness containment with the help of the reduction graph.}\label{alg:FindReduction}
\begin{algorithmic}[1]
\Input An $n$-state automaton $\mathrsfs{A}=(Q,\Sigma,\delta)$ with the preprocessed data from~\Cref{lem:ComplementList}, a reduction graph $\mathcal{R}$ for $\mathrsfs{A}$, and a minimal leaf $V$ in $\mathcal{R}$.
\Require $\mathcal{R}$ is valid for $\mathrsfs{A}$.
\Output A state $p \in \overline{V}$ or \none.
\Ensure If there exists an $a$-predecessor of $\overline{V}$, where $a \in \Sigma$, that is either larger than $\overline{V}$ or is not the complement of a minimal leaf in $\mathcal{R}$, then the function returns a state $r \in \overline{V}$ such that all witnesses contained in $\overline{V}$ are also contained in $\overline{V} \setminus \{r\}$.
Otherwise, the function returns \none.
\Complexity If a state is returned, $\O(|\Sigma|\cdot |V| + n)$ time; otherwise, $O(|\Sigma|\cdot |V|)$ time. The working space is $\O(n)$. (\Cref{lem:FindReduction})
\Function{FindReduction}{$\mathrsfs{A},\mathcal{R},V$}
\ForAll{$a \in \Sigma$}
  \If{$\overline{V}$ has $a$-predecessor}
    \State $T = \delta^{-1}(\overline{V},a)$
    \If{$|T| > |\overline{V}|$}
      \State $r \gets$ any state $r \in \overline{V}$ such that $|\delta^{-1}(r,a)| > 1$
      \State \Return $r$
    \EndIf
    \State \Assert{$|T| = |\overline{V}|$}
    \If{$\overline{T}$ is not the set of a minimal leaf in $\mathcal{R}$}
      \State $p \gets \Call{GetImpliedReduction}{\mathcal{R},T}$
      \State \Return $\delta(p,a)$
    \EndIf
  \EndIf
\EndFor
\State \Return \textbf{none}
\EndFunction
\end{algorithmic}
\end{algorithm}

\begin{lemma}\label{lem:FindReduction}
\Call{FindReduction}{} from~\Cref{alg:FindReduction} is correct and works in $\O(|\Sigma|\cdot |V|)$ time if the answer is $\none$, and in $\O(|\Sigma|\cdot |V| + n)$ time if the answer is a state.
The working space is in $\O(n)$ (not counting the arguments).
\end{lemma}
\begin{proof}
\noindent\textit{Correctness}:
There are three possible exits from the function.

In line~7, letter $a$ is a properly extending word for $\overline{V}$, thus we know that the reduction by $r$ is correct by~\Cref{lem:ReductionByPredecessor}, i.e., all witnesses contained in $\overline{V}$ are also contained in $\overline{V} \setminus \{r\}$.

For line~11, we need to show that all witnesses in $\overline{V}$ are contained in $\overline{V} \setminus \{\delta(p,a)\}$.
In line~8, we know that $T$ has the same size as $\overline{V}$ because a predecessor cannot be smaller than the set and we check if it is larger in line~5.
Hence, $\overline{T}$ has the size of the minimal leaves in $\mathcal{R}$. 
In line~10, we check if $\overline{T}$ is not a minimal leaf.
Since the scope of every vertex that is not a minimal leaf is larger than $\overline{T}$, this means that the scope of such a vertex contains a state that is in $T$.
Hence, \Call{GetImpliedReduction}{} meets the requirement, so provides a correct answer, and we know that all witnesses contained in $T$ are contained in $T \setminus \{p\}$.
Now, suppose for a contradiction that there is a witness $W \subseteq \overline{V}$ that contains $\delta(p,a)$.
Since $T$ is $a$-predecessor of $\overline{V}$, $W' = \delta^{-1}(W,a)$ is $a$-predecessor of $W \subseteq \overline{V}$, and $W' \subseteq T$.
Since $\delta(p,a) \in W$, we also have $p \in W'$.
By~\Cref{cor:Witness}~(2), $W'$ is also a witness, but since it is contained in $T$, it must be contained also in $T \setminus \{p\}$, which yields a contradiction.
It follows that all witnesses contained in $\overline{V}$ are contained in $\overline{V} \setminus \{\delta(p,a)\}$, thus the outcome of the function is correct.

When the function reaches the final case in~line~12, all predecessors of $\overline{V}$ by one letter were checked, so they all are of the same size $|\overline{V}|$ and are the complements of minimal leaves.
Thus, the outcome is as specified.

\noindent\textit{Running time}:
When $V$ is stored as a list of states, thus as a list of states in the complement of $\overline{V}$, lines~3--4 can be done in $\O(n-|\overline{V}|) = \O(|V|)$ time by~\Cref{lem:ComplementList}.
This list is stored for $V$ in the reduction graph.
The condition in line~5 is done in $\O(1)$ time.
The condition in line~9 can be checked in $\O(n-|T|) = \O(|V|)$ time by examining each $\vertexOf(q)$ for all $q \in \overline{T}$, looking whether it is the same vertex for all letters $a$ and its size equals $|\overline{V}|$, which is the size of the minimal leaves.
Thus, the whole \textbf{for} loop in line~3 works in $\O(|\Sigma|\cdot |V|)$ time, with the possible exception of the last iteration when a state is returned.
Lines~6--7 are done in $\O(n)$ by examining each $p \in \overline{V}$; note that we cannot do this in $\O(|V|)$ time.
Lines~10--11 are also done in $\O(n)$ time, including the conversion of $T$ to the array form for calling \Call{GetImpliedReduction}{}.
One of these cases is executed if only if a state is returned -- they are responsible for the $+n$ component in the running time formula.

\noindent\textit{Working space}:
In the working space, we need to store only $V$, $a$, $T$, $r$, and $p$, and a constant number of indices for iterations, which fit in $\O(|V|)$ when the sets are stored as lists of states in their complements.
If there is a call to \Call{GetImpliedReduction}{}, then we need $\O(n)$ space for converting and passing $T$ in the array form.
\end{proof}

\subsection{Updating the reduction graph}

When we know a minimal leaf $V$ and a state $r \in \overline{V}$ such that every witness contained in $\overline{V}$ is contained in $Q \setminus (V \cup \{r\})$, we need to update the reduction graph with this new knowledge.
From the reduction graph, we know further that every witness in $\overline{V}$ must be contained in $Q \setminus (V \cup V')$, where $V' = \vertexOf(r)$.
Hence, we add the edge $V \to V'$.
Since there were no edges outcoming from $V$, the obtained graph still has its maximum out-degree $1$.
Now, there are two cases.
If there is no cycle, the graph is acyclic and we are done.
If there is a cycle, then we replace all the vertices from the cycle with one vertex $U$ which is the union of them.
All edges that were incoming to a vertex in the cycle are redirected to $U$.
Note that no edges were outgoing from the cycle to a vertex outside it, hence $U$ becomes a leaf.

Performing such an update can be done in $\O(n)$ time and space.
We add an edge in constant time.
Then we check if there is a cycle simply by following the path from $V$ and checking if we encounter an already visited vertex.
If so, we replace all these visited vertices with the union vertex $U$, and then we iterate over all edges to redirect those going to a vertex in the cycle.
In this case, also the list of states for $U$ is computed by concatenating the lists for the vertices in the cycle.

We denote the described updating procedure by $\Call{UpdateReductionGraph}{\mathcal{R},V,V'}$, which takes two vertices to add an edge.

\subsection{Adding a reduction}

We build one more auxiliary function for performing one step of finding a reduction followed by updating the reduction graph.
\Call{AddReduction}{} is shown in~\Cref{alg:AddReduction}.
The function iterates over all minimal leaves in $\mathcal{R}$ and uses \Call{FindReduction}{} to find a reduction -- a state to remove.
When it is found, it calls \Call{UpdateReductionGraph}{}.

The function works in $\O(|\Sigma|\cdot n)$ time despite multiple calls to \Call{FindReduction}{}, which is due to maintaining sets in the form of a list of states in the complement and because leaves are disjoint.

\begin{algorithm}[htb!]
\caption{Adding a reduction to the reduction graph.}\label{alg:AddReduction}
\begin{algorithmic}[1]
\Input An $n$-state automaton $\mathrsfs{A}=(Q,\Sigma,\delta)$ with the preprocessed data from~\Cref{lem:ComplementList} and a reduction graph $\mathcal{R}$ for $\mathrsfs{A}$.
\Require The reduction graph $\mathcal{R}$ is valid for $\mathrsfs{A}$.
\Output \textbf{true} if $\mathcal{R}$ has been updated. \textbf{false} if the complements of all minimal leaves in $\mathcal{R}$ are maximal witnesses.
\Ensure $\mathcal{R}$ remains valid.
\Complexity $\O(|\Sigma| \cdot n)$ time and $\O(n)$ working space (\Cref{lem:AddReduction}).
\Function{AddReduction}{$\mathrsfs{A}$,$\mathcal{R}$}
\ForAll{$V \gets $ a minimal leaf in $\mathcal{R}$}
  \State $r \gets \Call{FindReduction}{\mathrsfs{A},\mathcal{R},V}$
  \If{$q \neq \none$}
    \State $\Call{UpdateReductionGraph}{\mathcal{R},V,\vertexOf(r)}$
    \State \Return \true
  \EndIf
\EndFor
\State \Return \false
\EndFunction
\end{algorithmic}
\end{algorithm}

\begin{lemma}\label{lem:AddReduction}
Function~\Call{AddReduction}{} from~\Cref{alg:AddReduction} is correct and works in $\O(|\Sigma|\cdot n)$ time and $\O(n)$ working space (not counting the automaton).
\end{lemma}
\begin{proof}
\noindent\emph{Correctness}:
The correctness follows from the specification of~\Call{FindReduction}{} and~\Call{UpdateReductionGraph}{}.

When the function returns \true, the call to \Call{FindReduction}{} for the last chosen minimal leaf $V$ has returned a state $r$ such that every witness in $\overline{V}$ is also contained in $\overline{V} \setminus \{r\}$.
Since the reduction graph is valid, every such witness is also contained in $\overline{V} \setminus \vertexOf(r)$.
Hence, the reduction graph remains valid after updating by \Call{UpdateReductionGraph}{} with the edge from $V$ to $\vertexOf(r)$.

When the function returns \false, the reduction graph is unmodified and thus trivially remains valid.
In this case, for all minimal leaves $V$, the call to \Call{FindReduction}{} returned \none.
Hence, for each such $V$, all predecessors of $\overline{V}$ by one letter are also complements of minimal leaves.
Every such $\overline{V}$ does not have a larger predecessor, and the complements of its predecessors are pairwise disjoint since they are some vertices in the reduction graph.
Hence, all $\overline{V}$ are witnesses.
Furthermore, they are maximal witnesses, because every larger subset could be reduced using the reduction graph.
Indeed, let $S \subsetneq Q$ be a larger subset than the witness $\overline{V}$, where $V$ is a minimal leaf.
Choose any $q \in \overline{S}$.
The scope of every vertex has its size of at least $|V|$, hence in particular $\scopeOf(\vertexOf(q))$.
Since $\overline{S}$ is smaller than $|V|$, the intersection $S \cap \scopeOf(\vertexOf(q))$ is nonempty, hence, we can obtain for $S$ a state to reduce for witness containment by \Call{GetImpliedReduction}{}.

\noindent\emph{Running time:}
Consider all iterations of the \textbf{for} loop (line~2) where $q = \none$, thus all iterations except possibly the last one.
Then each call to \Call{FindReduction}{} works in $\O\left(|\Sigma|\cdot |V|\right)$ time by~\Cref{lem:FindReduction}.
Since the sum of the sizes of all considered sets $V$ is at most $n$, the running time of all these iterations does not exceed $\O(|\Sigma|\cdot n)$.
In the possible last iteration with $q \ne \none$, \Call{FindReduction}{} works in $\O(|\Sigma|\cdot |V| + n)$ time, and \Call{UpdateReductionGraph}{} works in $\O(n)$ time.
Altogether, the time complexity is in $\O(|\Sigma|\cdot n)$.

\noindent\emph{Working space}:
We store $\mathcal{R}$ (it can be modified), $V$, $q$, $\vertexOf(q)$, which altogether fit in $\O(n)$.
Also, the calls to \Call{FindReduction}{} and \Call{UpdateReductionGraph} take $\O(n)$ working space.
\end{proof}

\subsection{Final algorithm}

The final algorithm is shown in~\Cref{alg:FindMaximalWitness}.
We start from the preprocessing of the automaton from~\Cref{lem:ComplementList}, which is crucial for computing predecessors in~\Call{FindReduction}{} in time $\O(|V|)$ instead of $\O(n)$.
We initialize the reduction graph: in the beginning, it has $n$ vertices and no edges; for each state $q \in Q$, there is one vertex $V = \{q\}$.
In the main loop, we call \Call{AddReduction}{} until either it fails (returns \false), meaning that we encountered a witness, or the reduction graph finally becomes a single vertex, meaning that there are no witnesses.

The running time is the consequence of the number of iterations where the reduction graph is updated, which is at most $2n-2$, and of the running time $\O(|\Sigma|\cdot n)$ of \Call{AddReduction}{} called in one iteration.

\begin{algorithm}[htb!]
\caption{The algorithm for finding a maximal witness.}
\label{alg:FindMaximalWitness}
\begin{algorithmic}[1]
\Input An $n$-state automaton $\mathrsfs{A}=(Q,\Sigma,\delta)$.
\Output \textbf{none} if $\mathrsfs{A}$ is completely reachable; a maximal witness otherwise.
\Complexity $\O(|\Sigma| \cdot n^2)$ time and $\O(|\Sigma|\cdot n)$ space (\Cref{thm:FindMaximalWitness}).
\Function{FindMaximalWitness}{$\mathrsfs{A}$}
\State Preprocess $\mathrsfs{A}$ as in~\Cref{lem:ComplementList}
\State $\mathcal{R} \gets$ empty graph with $n$ vertices with assigned singletons of states
\While{$\mathcal{R}$ has at least $2$ vertices}
  \If{\textbf{not} $\Call{AddReduction}{\mathrsfs{A},\mathcal{R}}$}
    \State $V \gets$ any minimal leaf in $\mathcal{R}$
    \State \Return $Q \setminus V$ \Comment{A maximal witness}
  \EndIf
\EndWhile
\State \Return \none \Comment{No witnesses -- $\mathrsfs{A}$ is completely reachable}
\EndFunction
\end{algorithmic}
\end{algorithm}

\begin{lemma}\label{lem:UpdatesReductionGraph}
There are at most $2n-2$ updates (\Call{UpdateReductionGraph}{}) of the reduction graph until it has one vertex.
\end{lemma}
\begin{proof}
One update of the reduction graph either (1) adds an edge, or, (2) after adding an edge, unifies at least two vertices.

In~(2), the number of vertices is decreased, hence, there are at most $n-1$ such updates.

In~(1), adding an edge decreases the number of weakly connected components (maximal subsets of vertices that are mutually reachable when the edges are considered undirected).
Furthermore, in case~(2), the number of weakly connected components is preserved, because the edge is added between vertices in the same weakly connected component, and unifying the cycle trivially does not change their number.
In the beginning, there are $n$ weakly connected components, hence there are at most $n-1$ updates of case~(1).
\end{proof}

\begin{theorem}\label{thm:FindMaximalWitness}
Function~\Call{FindMaximalWitness}{} is correct and works in $\O(|\Sigma|\cdot n^2)$ time and $\O(|\Sigma|\cdot n)$ space.
\end{theorem}
\begin{proof}
\noindent\emph{Correctness}:
Note that the reduction graph $\mathcal{R}$ is kept valid.
The empty reduction graph is trivially valid, and we know that \Call{AddReduction}{} preserves its validity (\Cref{lem:AddReduction}).

If \Call{AddReduction}{} returns \false, we know that the complements of minimal leaves in $\mathcal{R}$ are maximal witnesses, hence the answer is correct.

When $\mathcal{R}$ has $1$ vertex, this means that every witness could be reduced to the empty set, i.e., for every subset $S \subsetneq Q$, every witness contained in $S$ is contained in $\emptyset$, thus there are no witnesses.

\noindent\emph{Running time}:
Line~1 works in $\O(|\Sigma|\cdot n)$ time by~\Cref{lem:ComplementList}.
Lines~3, 6, and~7 work in $\O(n)$ time and they are executed (at most) once.
Checking the condition in line~4 takes constant time.
\Call{AddReduction}{} in line~5 works in $\O(|\Sigma|\cdot n)$ time.

By~\ref{lem:UpdatesReductionGraph}, there are at most $2n-2$ updates of the reduction graph before the condition in line~4 becomes false.
Hence, this bounds the number of iterations.

Summarizing, the main loop works in $\O(|\Sigma|\cdot n^2)$ time, which dominates the time complexity of the function.

\noindent\emph{Running space}:
The space is dominated by storing the automaton $\mathrsfs{A}$ and its preprocessed structures, which both take $\O(|\Sigma|\cdot n)$ space.
Besides these, we need only $\O(n)$ space for the reduction graph, $V$ and the returned set, and the calls to \Call{AddReduction}{}.
\end{proof}

\begin{remark}
After a slight modification, \Call{FindMaximalWitnessOptimized}{} can return all maximal witnesses as well.
When \Call{AddReduction}{} returns {\false}, the complements of all minimal leaves in the final reduction graph are maximal witnesses.
\end{remark}

\section{A quadratic upper bound on reaching threshold}\label{sec:Bounds}

We show a quadratic upper bound on the length of the reaching threshold of a subset, provided that all subsets of the same or larger size are reachable.
Hence, in particular, this applies to the class of completely reachable automata and proves a weaker Don's conjecture for this class.

For a given nonempty proper subset $S \subsetneq Q$, we can find a short properly extending word.
Hence, for a completely reachable automaton, using the well-known \emph{extension} method (e.g., \cite{Volkov2008Survey,Volkov2022Survey}), we can construct a reaching word starting from $S$ and iteratively increasing its preimage by at least one by applying the inverse action of the found word.
Finally, we obtain the preimage equal to $Q$ in most $n-|S|$ iterations.

\subsection{Finding short properly extending words}

Our method for finding a short properly extending word is based on function \Call{IsWitness}{} from~\Cref{alg:IsWitness}, further developed suitably, and the complement-intersecting technique, whose core is described in~\Cref{lem:UnionPredecessor}.

\begin{algorithm}[htb!]
\caption{Finding a short properly extending word.}
\label{alg:FindShortProperlyExtendingWord}
\begin{algorithmic}[1]
\Input An $n$-state automaton $\mathrsfs{A}=(Q,\Sigma,\delta)$ and a nonempty $S \subsetneq Q$.
\Require $S$ and all subsets of $Q$ of size $> |S|$ are reachable.
\Output A properly extending word for $S$.
\Function{FindShortProperlyExtendingWord}{$\mathrsfs{A},S$}
\State $\mathit{Trace} \gets \Call{EmptyMap}{ }$ \Comment{For a processed set, it stores how this set was obtained; for not yet processed sets, it gives $\textbf{none}$}
\State $\mathit{Process} \gets \Call{EmptyFifoQueue}{ }$
\State $\mathit{Trace}[S] \gets \varepsilon$
\State $\mathit{Process}.\Call{Push}{S}$
\While{\textbf{true}}
  \State \textbf{assert}($\textbf{not}\ \mathit{Process}.\Call{IsEmpty}{ }$) \Comment{Otherwise $S'$ must be a witness}
  \State $T \gets \mathit{Process}.\Call{Pop}{ }$
  \If{there is $T'$ such that $\mathit{Trace}[T'] \neq \textbf{none}$ and $T \subsetneq T \cup T' \subsetneq Q$}
    \State $\mathit{Trace}[T \cup T'] \gets (T, T')$
    \State $\mathit{Process}.\Call{Clear}{ }$ \Comment{Continue only for the new set}
    \State  $\mathit{Process}.\Call{Push}{T \cup T'}$
  \Else
    \ForAll{$a \in \Sigma$}
      \If{$T$ has $a$-predecessor}
        \State $T' \gets \delta^{-1}(T,a)$
        \If{$\mathit{Trace}[T'] = \textbf{none}$} \Comment{A new set}
          \State $\mathit{Trace}[T'] \gets a$ \Comment{$\delta(T',a)=T$}
          \If{$|T'| > |T|$}
            \State \Return $\Call{Reconstruct}{\mathrsfs{A},T',S,\mathit{Trace}}$
          \EndIf
          \State $\mathit{Process}.\Call{Push}{T'}$
        \EndIf
      \EndIf
    \EndFor
  \EndIf
\EndWhile
\EndFunction
\Function{Reconstruct}{$\mathrsfs{A},C,S,\mathit{Trace}$} \Comment{Reconstruct word $w$ such that $\delta^{-1}(S,w)=C$}
\State $w \gets \varepsilon$ \Comment{Start with the empty word}
\While{$C \neq S$}
  \State \textbf{assert}($\mathit{Trace}[C] \neq \textbf{none}$ and $\mathit{Trace}[C] \neq \varepsilon$)
  \If{$\mathit{Trace}[C]$ is a letter}
    \State $a \gets \mathit{Trace}[C]$
    \State $w \gets wa$
    \State $C \gets \delta(C,a)$
  \Else
    \State $(T,T') \gets \mathit{Trace}[C]$ \Comment{$w$ properly extends $T \cup T'$ so either $T$ or $T'$}
    \If{$|\delta^{-1}(T,w)| > |T|$} \Comment{$w$ properly extends $T$}
      \State $C \gets T$
    \Else \Comment{$w$ properly extends $T'$} 
      \State $C \gets T'$
    \EndIf
  \EndIf
\EndWhile
\State \Return $w$
\EndFunction
\end{algorithmic}
\end{algorithm}

\Cref{alg:FindShortProperlyExtendingWord} shows two functions.
\Call{FindShortProperlyExtendingWord}{} is the main one, which calls \Call{Reconstruct}{} at the end.

The idea is a breadth-first search on predecessors of the given subset $S$.
We search for a direct larger predecessor or a nonempty complements intersection of the current set and a previously visited one.
The queue $\mathit{Process}$ stores the next sets to consider.
All visited subsets are stored for an intersection of their complements in the map $\mathit{Trace}$, which for a visited set, stores either the letter that was used to obtain this set as a maximal predecessor from a previous set, or in the second case, a pair of smaller sets whose union is the given set.
This map is also finally used to reconstruct the properly extending word.

The next set $T$ is taken from the queue in line~8.
Then we check it for a nonempty complement intersection in line~9.
If this happens, the information on how the larger set was obtained, i.e., the two sets of the union, is stored in $\mathit{Trace}$ (line~10).
The search is reset to continue only from this union (lines~11--12).
However, all previously considered sets are kept as keys in $\mathit{Trace}$ and still participate in the search for a complement intersection (line~9).

If the complement intersection case does not occur, we compute all the preimages of $T$ by one letter (lines~14--21).
If the same preimage has been already visited, it is skipped (line~17).
If a larger preimage is found (line~19), then the search ends and the reconstruction phase begins (line~20).
Because of the input requirements, all visited sets that we encounter are reachable, so finally, a larger predecessor must be found.

The reconstruction process is fully based on map $\mathit{Trace}$.
The word $w$ is the currently reconstructed part of the properly extending word for $S$, and the set $C$ is the current set from which we must reach $S$.
In each iteration of the main \textbf{while} loop (line~24), one of the two cases occurs.
If the current set was obtained as the preimage of a previous set under the action of a letter, we apply this letter to $C$ and append it to $w$ (lines~27--29).
In the second case, we choose the set of the two which formed the larger union (lines~31--35), depending on for which one the current word is properly extending.
Finally, the function ends when $C = S$, which means that $w$ is the reconstructed properly extending word for $S$.

\begin{example}
For the right automaton from~\Cref{fig:examples-completely} (completely reachable) and $S = \{q_0\}$, \Call{FindShortProperlyExtendingWord}{} returns the word $ca^2$ (assuming alphabetical order on $\Sigma$ in line~14 and lexicographical order on subsets in line~9). 
\end{example}
\begin{minipage}{0.60\textwidth}
\setlength{\parindent}{1em}
\setlength{\parskip}{1em}
\renewcommand{\arraystretch}{1.2}
$\begin{array}{l|l|l|l}
\text{Iteration} & \text{Set $T$} & \text{Added entries to $\mathit{Trace}$} \\\hline
Init &             & \mathit{Trace}[\{q_0\}]=\varepsilon \\
1    & S = \{q_0\} & \mathit{Trace}[\{q_5\}]=a \\
2    & \{q_5\}     & \mathit{Trace}[\{q_0,q_5\}] = (\{q_5\},\{q_0\}) \\
3    & \{q_0,q_5\} & \mathit{Trace}[\{q_4,q_5\}]=a \\
4    & \{q_4,q_5\} & \mathit{Trace}[\{q_0,q_4,q_5\}]= (\{q_4,q_5\},\{q_0\}) \\
\multirow{2}{*}{5} & \multirow{2}{*}{$\{q_0,q_4,q_5\}$} & \mathit{Trace}[\{q_1,q_4,q_5\}] = a, \\
                   &                 & \mathit{Trace}[\{q_0,q_3,q_4,q_5\}] = c \\
\end{array}$
\end{minipage}
\begin{minipage}{0.35\textwidth}
\hspace{.1cm}
\scalebox{0.83}{\begin{tikzpicture}[node distance=1.9cm,scale=1,every node/.style={transform shape}]
\tikzset{every loop/.style={min distance=.5cm,looseness=6}}
\node[state] (q0) {$q_0$};
\node[state] [right=of q0] (q1) {$q_1$};
\node[state] [right=of q1] (q2) {$q_2$};
\node[state] [below=of q2] (q3) {$q_3$};
\node[state] [left=of q3] (q4) {$q_4$};
\node[state] [left=of q4] (q5) {$q_5$};
\draw(q1) edge node[auto,midway]{$b$} (q2);
\draw(q3) edge node[auto,midway]{$c$} (q4);
\draw(q0) edge[dashed] node[auto,midway]{$a$} (q1);
\draw(q1) edge[dashed] node[auto,midway]{$a$} (q4);
\draw(q4) edge[dashed] node[auto,midway]{$a$} (q5);
\draw(q5) edge[dashed] node[auto,midway]{$a$} (q0);

\draw(q0) edge[looseness=.7,out=-60,in=60] node[auto,midway]{$b$} (q5);
\draw(q5) edge[looseness=.7,bend left] node[auto,midway]{$b$} (q0);

\draw(q2) edge[dashed,looseness=.7,out=-60,in=60] node[auto,midway]{$a$} (q3);
\draw(q3) edge[dashed,looseness=.7,bend left] node[auto,midway]{$a$} (q2);

\draw(q0) edge[loop,out=155,in=115,looseness=6] node[auto,pos=.6]{$c$} (q0);
\draw(q1) edge[loop,out=110,in=70,looseness=6] node[auto,midway]{$c$} (q1);
\draw(q2) edge[loop,out=80,in=40,looseness=6] node[above,midway]{$b,c$} (q2);
\draw(q3) edge[loop,out=-40,in=-80,looseness=6] node[below,midway]{$b$} (q3);
\draw(q4) edge[loop,out=-70,in=-110,looseness=6] node[auto,midway]{$b,c$} (q4);
\draw(q5) edge[loop,out=-100,in=-140,looseness=6] node[auto,pos=.4]{$c$} (q5);
\end{tikzpicture}} 
\end{minipage}
\begin{proof}[Proof of the example]
The execution is summarized in the above table and described below.
\begin{itemize}
\item Before the main loop, we just add $\mathit{Trace}[S] = \varepsilon$.
\item 1st iteration: $T = S = \{q_0\}$ and its preimages are computed. Only $a^{-1}$ gives a new preimage, which $\{q_5\}$; $\delta^{-1}(\{q_0\},b) = \{q_0\}$ is skipped.
\item 2nd iteration: The case from line~9 occurs. $\{q_5\}$ has a nonempty complement intersection with (and is not contained in) previously visited $\{q_0\}$, so the search continues from the union $\{q_5\} \cup \{q_0\}$; both these sets are stored in $\mathit{Trace}[\{q_0,q_5\}]$.
\item 3rd iteration: Only $a^{-1}$ gives a new preimage from $\{q_0,q_5\}$, which is $\{q_4,q_5\}$.
\item 4th iteration: $\{q_4,q_5\}$ has a nonempty complement intersection with (and is not contained in) previously visited $\{q_0\}$. Alternatively, $\{q_0,q_5\}$ instead of $\{q_0\}$ could be chosen as well. The search continues from the union $\{q_4,q_5\} \cup \{q_0\}$.
\item 5th iteration: Two letters give a new preimage: $\delta^{-1}(\{q_0,q_4,q_5\},a) = \{q_1,q_4,q_5\}$ and $\delta^{-1}(\{q_0,q_4,q_5\},c) = \{q_0,q_3,q_4,q_5\}$. The latter is a larger predecessor, so we start the reconstruction phase.
\item Reconstruction: Starting from $C = \{q_0,q_3,q_4,q_5\}$, we have $\mathit{Trace}[C] = c$.
Then $C = \{q_0,q_4,q_5\}$ and $\mathit{Trace}[C] = (\{q_4,q_5\},\{q_0\})$; only the first set is properly extended by the current $w = c$, so the next $C = \{q_4,q_5\}$.
We have $\mathit{Trace}[C] = a$ and $C$ becomes $\{q_0,q_5\}$.
Then $\mathit{Trace}[\{q_0,q_5\}] = (\{q_5\},\{q_0\})$ and the first set is properly extended by the current $w = ca$.
Finally, $\mathit{Trace}[\{q_5\}]=a$ and $C = \{q_0\} = S$.
The reconstructed word $w$ is $c a^2$.
\end{itemize}
\end{proof}

\begin{lemma}
Function~\Call{FindShortProperlyExtendingWord}{} is correct.
\end{lemma}
\begin{proof}
\noindent\emph{Correctness}:
By the input requirement, $S$ is not a witness.
Hence, either a larger predecessor is found or the case of nonempty complement intersection from line~9 occurs for some of the predecessors of $S$.
In the latter case, the function continues from the larger union, which is also reachable.
Hence, it follows by induction on the steps with the latter case that finally a larger predecessor must be found in line~19.
Thus, queue $\mathit{Process}$ is never emptied and the function ends with the call to \Call{Reconstruct}{}.

For the reconstruction phase, we show that, except for the beginning of the first iteration, the following invariants are kept between iterations:
\begin{enumerate}
\item $w$ is a properly extending word for the current set $C$, and
\item $\mathit{Trace}[c]$ is not \none.
\end{enumerate}
We show this by induction on iterations of the \textbf{while} loop in line~24.
In the first iteration, $w$ is empty and $C$ is the given set $T'$ from the call in line~20.
Thus $\mathit{Trace}[C]$ is a letter and $|C| > |\delta(C,\mathit{Trace}[C])| = |T|$, so we set $w = a$ and the invariant holds for the next iteration.
In the other iterations, by the inductive assumption, $w$ is a properly extending word for the current $C$ at the beginning of the iteration.
$\mathit{Trace}[C]$ is not \none, and we have two cases depending on whether $\mathit{Trace}[C]$ is a letter.
If this is a letter $a$, then the invariant is kept by the next application of $a$ (lines~27--29) and setting $C = \delta(C,a)$.
Because the letter $a$ has been put to $\mathit{Trace}$ in line~18, where $T'$ is $a$-predecessor of $T$, the set $T = \delta(T',a)$ has been processed previously so $\mathit{Trace}[T']$ is not \none, and our new $C$ is $T'$.
In the second case, we recall the condition $T \subsetneq T \cup T' = C \subsetneq Q$ from line~9, which implies that a properly extending word for $C$ is properly extending for either $T$ or $T'$ by~\Cref{lem:UnionPredecessor}.
We directly check which case holds in line~32, hence $w$ is a properly extending word for the newly chosen $C$.

Finally, we observe that the reconstruction terminates with $C = S$.
In both cases depending on whether $\mathit{Trace}[C]$ is a letter, the next set $C$ is also present in $\mathit{Trace}$ and it has been added in an earlier iteration of the while loop in line~6, or it is $S$ added in line~4.
Hence, finally, it must be the latter case.

Concluding, in the end, we have $C = S$, so both functions return a word $w$ that is properly extending for $S$.
\end{proof}

The remaining effort is to prove an upper bound on the length of the returned word by~\Call{FindShortProperlyExtendingWord}{}.

\subsubsection{Trivally intersecting subsets -- a combinatorial problem}

To provide a bound on the length of the found word, we consider the following auxiliary combinatorial problem.
Let $Q$ an $n$-element universe.
Two subsets $S, T \subseteq Q$ are \emph{trivially intersecting} if $S \cap T \in \{\emptyset,S,T\}$, i.e., either they are disjoint or one set contains the another.
A family of subsets of $Q$ is called \emph{trivially intersecting} if all the subsets are pairwise trivially intersecting.

\begin{definition}
For an integer $n \ge 1$, the number \emph{$\mathit{TriviallyIntersectingSubsets}(n)$} is the maximum size of a trivially intersecting family for an $n$-element universe.
\end{definition}

\begin{lemma}\label{lem:TriviallyIntersectingSubsets}
$\mathit{TriviallyIntersectingSubsets}(n) = 2n$.
\end{lemma}
\begin{proof}
It goes by induction on $n$.
For the base case, $\mathit{TriviallyIntersectingSubsets}(1) = 2$ (the $1$-element set and the empty set).

For $n > 1$, consider a largest trivially intersecting family.
Besides the set with all elements and the empty set, there should be exactly two proper subsets of $Q$ that are maximal with respect to inclusion.
Otherwise, if there are more, we could add a union of two of them, and if there is only one, we could add its complement, in both cases obtaining a larger trivially intersecting family.
Thus, the solution to the problem is recursive.
We use the inductive assumption two times excluding the empty set, which is counted separately.
\begin{align*}
\mathit{TriviallyIntersectingSubsets}(n) =&\ 2 + \max_{p \in \{1,\ldots,n-1\}} (\mathit{TriviallyIntersectingSubsets}(p) + \mathit{TriviallyIntersectingSubsets}(n-p)) \\
=&\ 2 + \max_{p \in \{1,\ldots,n-1\}} (2p-1 + 2(n-p)-1) \\
=&\ 2 + \max_{p \in \{1,\ldots,n-1\}} (2n-2) \\
=&\ 2 + 2n-2 = 2n.
\end{align*}
\end{proof}

Now, we consider a generalized version of the problem that is needed for our analysis of the algorithm.
We add limits on the maximum and minimum size of the sets in the family.

\begin{definition}
For integers $n \ge 1$, $0 \le \ell,k \le n$, the number $\mathit{TriviallyIntersectingSubsets}^{\le k}_{\ge \ell}(n)$ is the maximum size of a trivially intersecting family for an $n$-element universe where each subset from the family has size at most $k$ and at least $\ell$.
When $\ell = 0$ or $k = n$, we omit the subscript or the superscript, respectively.
\end{definition}

\begin{lemma}\label{lem:BoundedTriviallyIntersectingSubsets}
$\mathit{TriviallyIntersectingSubsets}^{\le k}(n) = 2n+1 - \lceil n/k \rceil$.
\end{lemma}
\begin{proof}
We prove the equality by induction descending on $k$ and ascending on $n$.
The base case $k = n$ is solved by~\Cref{lem:TriviallyIntersectingSubsets}, which also is the only possible case for $n=1$.

Let $1 \le k < n$ and assume that the equality holds for all $\mathit{TriviallyIntersectingSubsets}^{\le k'}(n')$, where $k' > k$ and $n'=n$, or $k' = k$ and $n' < n$.
Consider a trivially intersecting family for $k$ and $n$ that has the maximum size $\mathit{TriviallyIntersectingSubsets}^{\le k}(n)$.

If the largest subsets in the family have size $k$, then consider separately subsets of one of these largest sets and its complement, thus we split the problem into two, so our family has size:
\[ \mathit{TriviallyIntersectingSubsets}(k) + \mathit{TriviallyIntersectingSubsets}^{\le k}(n-k) - 1, \]
where $-1$ at the end comes from not counting the empty set twice.
By~\Cref{lem:TriviallyIntersectingSubsets} for $\mathit{TriviallyIntersectingSubsets}(k)$ and the inductive assumption for $\mathit{TriviallyIntersectingSubsets}^{\le k}(n-k)$:
\[ \mathit{TriviallyIntersectingSubsets}(k) + \mathit{TriviallyIntersectingSubsets}^{\le k}(n-k) - 1 = 2k + 2(n-k)+1 - \lceil (n-k)/k \rceil - 1 = 2n+1 - \lceil n/k \rceil. \]

If the largest subsets have size $k' < k$, then from the inductive assumption the size of the family equals
\[ \mathit{TriviallyIntersectingSubsets}^{\le k'}(n) = 2n+1 - \lceil n/k' \rceil \le 2n+1 - \lceil n/k \rceil. \]
Thus, the number is not larger than in the previous case, and by taking the maximum from the two possibilities we get the claim.
\end{proof}

Finally, in the particular variant that we need, we exclude singletons and the empty set.

\begin{corollary}
$\mathit{TriviallyIntersectingSubsets}^{\le k}_{\ge 2}(n) = n - \lceil n/k \rceil$.
\end{corollary}
\begin{proof}
$n$ singletons and the empty set are present in every maximal trivially intersecting family.
\end{proof}

\subsubsection{Bounding reaching thresholds}

We apply the above combinatorial problem to derive an upper bound on the length of a word found by \Call{FindShortProperlyExtendingWord}{}.
The crucial property is that the family of the complements of all processed subsets $T$ for which the condition in line~9 did not hold is trivially intersecting.
The upper bound follows since all the letters in the final reconstructed word were added in the second case, in line~18, and such a letter can be added only once to the final word.
Therefore, each such letter implies one more set in our trivially intersecting family.
Additionally, we count sets of size $n-1$ separately (which are singletons in the family of the complements), since their number is upper bounded by the maximum size of a group orbit.

\begin{lemma}\label{lem:FindShortProperlyExtendingWord}
Let $\ell$ be the maximum size of a group orbit of the automaton.
For a nonempty proper subset $S \subsetneq Q$, the word returned by~\Call{FindShortProperlyExtendingWord}{} from~\Cref{alg:FindShortProperlyExtendingWord} has length at most $\ell + n - \lceil n/(n-|S|)\rceil$.
\end{lemma}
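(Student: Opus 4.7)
The plan is to bound $|u|$ via Lemma~\ref{lem:MaxNestedBoxesk}, applied to a non-colliding family of subsets of $Q$ whose complements each have size at most $n - |S|$.

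First, I would analyze the reconstruction loop (lines~28--41) to describe $u = w_1 w_2 \cdots w_L$ as the concatenation of $\mathit{Trace}$ words of the word-Traced sets $V_1, V_2, \ldots, V_L$ visited along the trace-back chain. Each $w_k = \mathit{Trace}[V_k]$ was stored at line~10 when $V_k$ was first popped in the BFS of some phase with base $S'_{\pi(k)}$, so $|w_k|$ equals the BFS depth of $V_k$ in that phase, and each letter of $w_k$ corresponds to exactly one edge along the unique path from $V_k$ up to $S'_{\pi(k)}$ in that phase's BFS tree. Let $\mathcal{F}$ be the family of ``child endpoints'' of these BFS edges: distinct edges have distinct child sets (each set receives its $\mathit{Trace}$ entry at most once, by line~8), and since the trace-back visits each set at most once and the BFS paths in different phases do not share edges, $|u| = |\mathcal{F}|$. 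Every $T \in \mathcal{F}$ was processed as a BFS child in some phase, so $|T| = |S'_{\pi(T)}| \ge |S|$ and therefore $|\overline{T}| \le n - |S|$.

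The core of the proof is showing that $\{\overline{T} : T \in \mathcal{F}\}$ is pairwise non-colliding. The key invariant is that whenever a new set $T$ is added to $\mathit{Trace}$ at line~10, the inner while loop at line~15 immediately verifies the non-collision condition against every previously stored $T'$: if no trigger fires (i.e., no $T'$ satisfies $T \subsetneq T \cup T' \subsetneq Q$), then for every earlier $T'$ either $T' \subseteq T$ (giving $\overline{T} \subseteq \overline{T'}$) or $T \cup T' = Q$ (giving $\overline{T} \cap \overline{T'} = \emptyset$). Together with the fact that phase bases have non-decreasing size, this yields the non-colliding condition for all such pairs. The jumping-off sets $T_i$ that do collide with their pair partners $T'_i$ are handled by observing that the reconstruction's pair step commits to exactly one of $\{T_i, T'_i\}$ as some $V_k$, and careful analysis shows that the unpicked partner does not separately appear in $\mathcal{F}$ (in particular, it lies on no BFS path of another visited $V_{k'}$).

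Finally, applying Lemma~\ref{lem:MaxNestedBoxesk} to $\{\overline{T} : T \in \mathcal{F}\}$ gives $|\mathcal{F}| \le \mathit{MaxNestedBoxes}(n, n - |S|) = 2n - \lceil n/(n - |S|) \rceil$, which combined with $|u| = |\mathcal{F}|$ yields the desired bound. The main obstacle will be the careful verification that BFS paths traversed during trace-back do not duplicate any edge (so that $|u| = |\mathcal{F}|$ indeed holds, rather than just $|u| \ge |\mathcal{F}|$), and handling the re-entry of trace-back into phases via pair steps so that each pair-step commitment is sufficient to prevent collisions in $\mathcal{F}$.
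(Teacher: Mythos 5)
Your overall plan (charge each letter of the output word to one ``child'' set along the reconstructed BFS paths, show the complements of the charged sets form a non-colliding family, and invoke Lemma~\ref{lem:MaxNestedBoxesk}) is close in spirit to the paper, but the specific family you charge to does not work, and this is a genuine gap rather than a technicality. Your non-collision argument only covers sets whose inner-while check (line~15) terminated without firing; for a ``jumping-off'' set $T$ whose iteration did trigger a union, the algorithm certifies the exit condition only for the \emph{final union} $U_r$ of that iteration, while for $T$ itself we merely know that \emph{some} trigger exists. Such a $T$ can enter your family $\mathcal{F}$ (the reconstruction's pair step may legitimately commit to $T$, since the extending word found for $T\cup T'$ may extend $T$), and $\overline{T}$ can collide with complements of sets other than its pair partner --- in particular with the complement of $T$'s own BFS parent $A$, which necessarily also lies on $T$'s path and hence in $\mathcal{F}$ (nothing in the algorithm prevents $\overline{T}\cap\overline{A}\notin\{\emptyset,\overline{T},\overline{A}\}$; indeed $A$ itself could have served as the trigger). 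So the pairwise non-collision of $\{\overline{T}: T\in\mathcal{F}\}$ is false in general, and your patch of excluding only the unpicked pair partner does not repair it.

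The paper's proof sidesteps exactly this point by changing what is counted: each letter is charged to the \emph{main-loop iteration} in which the corresponding set received its $\mathit{Trace}$ entry, and the non-colliding family $\overline{\mathcal{X}}$ consists of the complements of the \emph{last} key defined in each iteration (the popped set when no union fires, otherwise the final union of the chain). Every such last key was checked against all previously traced sets when the inner loop exited, so $\overline{\mathcal{X}}$ is non-colliding with members of size at most $n-|S|$, and the total word length is bounded by the number of iterations, i.e.\ by $|\mathcal{X}|\le \mathit{MaxNestedBoxes}(n,n-|S|)$. A colliding jumping-off set is thus represented in the count by its iteration's union, never by itself. Separately, the issue you flag as your ``main obstacle'' (that reconstruction paths share no edges, so the letters are charged injectively) does need an argument, but it is resolvable: after a word step from a visited set one lands on that phase's base, which was defined before the phase began, and every subsequent step moves to a set traced strictly earlier, so at most one word-traced set per phase is ever visited; with the per-iteration charging this yields the needed inequality (only an upper bound on $|u|$ is required, not equality).
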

\begin{proof}
Let $\mathcal{X}$ be the family of all processed subsets $T$ for which the condition in line~9 did not hold, i.e., the sets $T$ in the iterations that passed to line~14.
Note that $S \in \mathcal{X}$ and $Q \notin \mathcal{X}$ (because obtaining $Q$ is possible only in direct extending in line~16 and always ends the search).

Observe that the size of $\mathcal{X}$ is an upper bound on the length of the reconstructed word:
If for some set $C$, $\mathit{Trace}[C]$ is a letter, then $\delta(C,\mathit{Trace}[C]) \in \mathcal{X}$.
Since during the reconstruction, the sets $C$ do not repeat, each time a letter is used in line~28, a new set $\delta(C,\mathit{Trace}[C])$ is added to $\mathcal{X}$.

Let $\overline{\mathcal{X}}$ denote $\mathcal{X}$ where each subset is replaced with its complement.
Now we show that the family $\overline{\mathcal{X}}$ is trivially intersecting.
All sets from $\mathcal{X}$ are also present as keys in $\mathit{Trace}$ (however, $\mathit{Trace}$ can contain more sets, which were either added in line~10 or line~18 and not yet processed).
Let $T \in \mathcal{X}$ be one of the sets.
So the condition from line~9 did not hold together with each $T' \in \mathcal{X}$ inserted before $T$.
Note that the sets $T$ taken from queue $\mathit{Proces}$ are processed in nondecreasing order by size, thus it cannot happen that $T \subsetneq T'$.
Therefore, the condition $T \subsetneq T \cup T' \subseteq Q$ is equivalent to that $T,T' \subsetneq T \cup T' \subsetneq Q$.
Hence it is symmetric, so it fails for every pair of states $T,T' \in \mathcal{X}$.
The negation of this condition implies that we have $T \subseteq T'$, $T' \subseteq T$, or $T \cup T' = Q$, which is equivalent to that $\overline{T} \cap \overline{T'} \in \{\emptyset,\overline{T},\overline{T'}\}$.
Thus, the family $\overline{\mathcal{X}}$ is trivially intersecting.

The sets $T \in \mathcal{X}$ have size $\ge |S|$, so their complements have size $\le n-|S|$, and $\mathit{TriviallyIntersectingSubsets}^{\le n-|S|}(n)$ is an upper bound on $|\overline{\mathcal{X}}| = |\mathcal{X}|$.
Also, the number of sets $T \in \mathcal{X}$ of size $n-1$ is at most $\ell$ since these sets cannot be reached from any set of size $n-1$ that misses its state from a different group orbit than the missing state of $T$.
Hence, we can bound their number by $\ell$, and bound the number of the rest of the sets $\mathcal{X} \setminus \{Q \setminus \{q\} \mid q \in Q\}$ by $\mathit{TriviallyIntersectingSubsets}^{\le n-|S|}_{\ge 2}(n)$.
Finally, we have:
\[ |\mathcal{X}| \le \ell + \mathit{TriviallyIntersectingSubsets}^{\le n-|S|}_{\ge 2}(n) = \ell + n - \lceil n/(n-|S|)\rceil,\]
which is our upper bound on the length of the found $w$.
\end{proof}

Finally, using the standard extension method (starting from the given subset $S$ and iteratively extending it to $Q$) and some calculations, we obtain an upper bound on the reaching threshold:

\begin{theorem}\label{thm:ReachingThresholdBound}
Let $(Q,\Sigma,\delta)$ be an automaton with the maximum size $\ell$ of a group orbit, let $S \subsetneq Q$ be a nonempty subset that is reachable, and assume that all subsets of $Q$ of size $> |S|$ are reachable.
Then $S$ is reachable with a word of length at most
\[ (n-|S|)(\ell+n) - n \cdot H_{n-|S|} ,\]
where $H_i$ is the $i$-th harmonic number.
\end{theorem}
\begin{proof}
Starting from $S$, we iteratively apply at most $n-|S|$ times a properly extending word returned by~\Call{FindShortProperlyExtendingWord}{}.
By~\Cref{lem:FindShortProperlyExtendingWord}, the sum of the lengths of the obtained words is at most
\begin{align*}
\sum_{k=|S|}^{n-1} \left( \ell + n - \lceil n/(n-k)\rceil \right) &\ \le (n-|S|)(\ell+n) - n\cdot \sum_{k=|S|}^{n-1} 1/(n-k) \\
&\ = (n-|S|)(\ell+n) - n \cdot \sum_{k=1}^{n-|S|} 1/k \\
&\ = (n-|S|)(\ell+n) - n \cdot H_{n-|S|} .
\end{align*}
\end{proof}

\section{Conclusions and discussion}\label{sec:Conclusions}

In the first part of the paper, we discussed witnesses as a tool for determining the complete reachability.
We have shown a polynomial-time algorithm for finding a maximal witness -- a largest unreachable subset of states.

\begin{corollary}
Problem~\textsc{CompletelyReachable} can be solved in $\O(|\Sigma|\cdot n^2)$ time and in $\O(|\Sigma|\cdot n)$ space.
\end{corollary}
It can be noted that the complexity of our algorithm is the same time complexity as that of the fastest known algorithm deciding the synchronizability of an automaton \cite{Eppstein1990,Volkov2022Survey}, although these algorithms are entirely different.

There is another interesting side corollary from our algorithm, concerning the maximal size of a minimal alphabet required to make an automaton completely reachable.
The alphabet of a completely reachable automaton is \emph{minimal with respect to complete reachability} if removing any of the letters makes it not completely reachable.
It turns out that such a minimal alphabet has at most $2n-2$ letters.
Furthermore, this bound is tight.

\begin{corollary}
If an $n$-state completely reachable automaton has a minimal alphabet with respect to complete reachability, then the size of this alphabet is at most $2n-2$.
\end{corollary}
\begin{proof}
Let $\Sigma$ be the alphabet of a completely reachable automaton.
Recall that \Call{FindReduction}{} from~\Cref{alg:FindReduction} finds just one letter to derive a reduction, and the existence of this letter alone is sufficient for this call of the function.
Suppose that we run the whole algorithm \Call{FindMaximalWitness}{} from~\Cref{alg:FindMaximalWitness} 
and construct $\Sigma' \subseteq \Sigma$ containing only these letters that have been used for deriving reductions in~\Call{FindReduction}{}.
We replace $\Sigma$ with $\Sigma'$, and run the algorithm again for the obtained automaton.
Then, the execution of the algorithm will be the same, with the possible exception of more iterations in~\Call{FindReduction}{}, which do not change the outcome of that function.
Hence, the computed reduction graphs will be the same and the algorithm outputs that the automaton is completely reachable.

The total number of reductions found in~\Call{FindReduction}{} equals the number of updates of the reduction graph, which is limited by $2n-2$ \Cref{lem:UpdatesReductionGraph}.
Thus, if $|\Sigma| > 2n-2$, the alphabet is not minimal with respect to complete reachability.
\end{proof}

\begin{proposition}
For every $n \ge 1$, there exists a completely reachable automaton with a minimal alphabet with respect to complete reachability of size $2n-2$.
\end{proposition}
\begin{proof}
We define an automaton where states are linearly ordered, and for each adjacent pair of states in this order, there are two letters with the actions mapping one state to the other and vice versa, while the other states are fixed.
Let $\mathrsfs{A} = (Q,\Sigma,\delta)$, where $Q = \{q_0,\ldots,q_{n-1}\}$ and $\Sigma = \{a_0,\ldots,a_{n-2},b_0,\ldots,b_{n-2}\}$.
For each $0 \le i < n$, we define $\delta(q_i,a_i)=q_{i+1}$ and $\delta(q_{i+1},b_i)=q_i$, and we let the other states be fixed by the action of these letters.

The automaton is completely reachable:
For a nonempty subset $S \subsetneq Q$, we can find a pair of states $q_i,q_{i+1}$ such that either $q_i \in S$ and $q_{i+1} \notin S$ or vice versa.
Then either $b_i$ or $a_i$, respectively, is properly extending for $S$.
It follows by~\Cref{rem:Characterization} that the automaton is completely reachable.

The alphabet is minimal with respect to complete reachability:
Removing any letter yields an automaton that is not strongly connected, which implies that it cannot be completely reachable \cite{BV2016CompletelyReachableAutomata}.
\end{proof}

In the second part of the paper, we have proved a quadratic upper bound on the length of the shortest words reaching a given subset (the reaching threshold of the subset).
It has been obtained with a different algorithm using the complement-intersection technique.

From~\Cref{thm:ReachingThresholdBound} with the maximum possible size of a group orbit bounded trivially by $n$, we get Don's conjecture (\cite{D2016OneContracting}, \cite[Problem~4]{GJ2019HardlyReachableSubsets}) weakened by the factor of $2$.
\begin{corollary}\label{cor:WeakDon}
In an $n$-state completely reachable automaton $(Q,\Sigma,\delta)$, every nonempty subset $S \subsetneq Q$ is reachable with a word of length smaller than $2n(n-|S|)$.
\end{corollary}

However, the actual upper bound is slightly smaller, which especially matters in certain cases, e.g., when $|S|$ is large.
\Cref{tab:ReachabilityLargeSize} shows the upper bound derived for subsets of size close to $n$.
The bound for $|S|=n-1$ is trivially tight, whereas for the others, it is an open problem.

\begin{table}[htb!]\centering\renewcommand{\arraystretch}{1.2}
\caption{The upper bound from~\Cref{thm:ReachingThresholdBound} compared to the simplified one from~\Cref{cor:WeakDon} on the reaching threshold of a subset of a large size.}\label{tab:ReachabilityLargeSize}
$\begin{array}{|c|r|r|r|r|r|r|r|r|}\hline
|S|                           & n-1 & n-2 & n-3 & n-4 & n-5 & n-6 & n-7 & n-8 \\\hline
2n(n-|S|) & 2\ n & 4\ n & 6\ n & 8\ n & 10\ n & 12\ n & 14\ n & 16\ n \\[4pt]
\multirow{2}{*}{$2n(n-|S|) - n \cdot H_{n-|S|}$} & n & 5/2\ n & 25/6\ n & 71/12\ n & 463/60\ n & 191/20\ n & 1597/140\ n & 3719/280\ n \\
& = n & = 2.5\ n & \simeq 4.17\ n & \simeq 5.92\ n & \simeq 7.72\ n & = 9.55\ n & \simeq 11.41\ n & \simeq\ 13.28 n \\\hline
\end{array}$\end{table}

Recently, the strict Don's conjecture has been disproved for the binary case of a completely reachable automaton and $|S|=n-2$ \cite{Zhu2024AroundDonsConjecture}, by exhibiting an infinite series of automata with subsets whose reaching threshold is $5/2n-3$.
Hence, for this case, we know an upper bound that is almost tight.
We note that, in the binary case, we always have $\ell = n$, as one of the letters must induce a full cycle on $Q$.

There are also better upper bounds than ours provided for certain subfamilies of binary completely reachable automata \cite{CV2023DonsConjectureForBinaryCompletelyReachable,Zhu2024AroundDonsConjecture}.
These considerations, however, do not apply to the general case, as they rely on the specific structure that a binary completely reachable automaton must have.

Using a lower bound on the harmonic numbers, we can also estimate our upper bound on reaching thresholds from~\Cref{thm:ReachingThresholdBound} with a more closed formula.
\begin{corollary}\label{cor:ReachingThresholdEstimation}
The upper bound on the reaching threshold from~\Cref{thm:ReachingThresholdBound} can be estimated as follows:
\[ (n-|S|)(\ell+n) - n \cdot H_{n-|S|} < (n-|S|)(n+\ell) - n \ln (n-|S|) - \gamma n ,\]
where $\gamma \simeq 0.5772156649$ is the Euler-Mascheroni constant.
\end{corollary}
\begin{proof}
We apply the inequality \cite{ChenQi2008TheBestBoundsForHarmonicHumber} $\ln(i) + \gamma + 1/(2i+1) \le H_i$, which holds for each natural number $i$.
\end{proof}

We can also get strict Don's conjecture when the group orbits are small, i.e., logarithmic in $n$.
This, for example, applies to the case where a completely reachable automaton does not have any permutational letters.
\begin{corollary}\label{cor:DonForLogEll}
For an $n \ge 3$, in an $n$-state completely reachable automaton $(Q,\Sigma,\delta)$ where the maximum size of a group orbit is $\ell \le \ln n$, every nonempty subset $S \subsetneq Q$ is reachable with a word of length at most $n(n-|S|)$.
\end{corollary}
\begin{proof}
We take the bound from~\Cref{cor:ReachingThresholdEstimation} and apply $\ell \le \ln n$, thus we have the upper bound:
\[ (n-|S|) (n+\ln n) - n \ln (n-|S|) - \gamma n = n(n-|S|) + (n-|S|)\ln n - n \ln (n-|S|) - \gamma n .\]
We show that $(n-|S|)\ln n - n \ln (n-|S|) - \gamma n \le 0$, which then implies the upper bound $n(n-|S|)$.

Let $x = n-|S|$.
We observe that the inequality $x\ln n \le n \ln x + \gamma n$ holds for all $1 \le x \le n$, assuming $n \ge 3$.
It is easy to check this for $x \in \{1,2\}$ with $n \ge 1$.
For $x \ge 3$, we have the stronger inequality $\ln n / n < \ln x / x$, which holds because the function $\ln i / i$ for $i \ge 3$ is decreasing and we have $n > x$.
This is equivalent to $x \ln n < n \ln x$, which implies our weaker inequality.
\end{proof}

Considering the \Cerny problem, we get a quadratic upper bound for completely reachable automata.
This additionally slightly improves over the bounds for previously studied subclasses (mentioned in the introduction) of completely reachable automata, which were of order $2n - \O(n)$.

\begin{corollary}
The reset threshold of a completely reachable automaton with $n \ge 2$ states is at most
\[ 1 + 2n(n-2) - n \cdot H_{n-2} \le 2n(n-2) - n \ln (n-2) - \gamma n + 1,\]
where $\gamma \simeq 0.5772156649$ is the Euler-Mascheroni constant.
The bound can be simplified to $2n(n-2) - n \ln n$ for $n \ge 6$.

When the maximum size of a group orbit is $\ell$, the upper bound can be improved to:
\[ 1 + (n-2)(n+\ell) - n \cdot H_{n-2} \le (n-2)(n+\ell) - n \ln (n-2) - \gamma n + 1 .\]
\end{corollary}
\begin{proof}
We find a state $q$ such that $|\delta^{-1}(\{q\},a)| \ge 2$ for some letter $a$.
Then we (properly) extend $\delta^{-1}(\{q\},a)$ of size at least $2$ by the word from~\Cref{thm:ReachingThresholdBound}.
Together with the first letter, this gives the upper bound:
\[ 1 +  (n-2)(\ell+n) - n \cdot H_{n-2} .\]
Depending on $\ell$, the bound takes its maximum for $\ell = n$, where we get $1 + 2n(n-2) - n \cdot H_{n-2}$.

By~\Cref{cor:ReachingThresholdEstimation}, we also get the claimed inequalities.
Finally, since $\ln(n-2) + \gamma - 1/n \ge \ln n$ for $n \ge 6$, we get the simplification.
\end{proof}

We obtain the \Cerny conjecture for the cases of completely reachable automata with the maximum size of a group orbit $\ell \le \ln n$.
This follows from~\Cref{cor:DonForLogEll} for a subset of size at least $2$ that is the preimage of a singleton under the inverse action of one letter: we get $1+n(n-2) = (n-1)^2$.

Finally, we note that relaxing the complete reachability to the reachability of only subsets that are large enough would be sufficient to derive a subcubic upper bound on the reset threshold.
It follows by the generalized method of \emph{avoiding words} \cite[Theorem~12]{FSV21LowerBoundsOnAvoidingThresholds} by the fact that reaching a set means avoiding its complement.
A word $w$ is \emph{avoiding} for a subset $S \subseteq Q$ if $S \cap \delta(Q,w) = \emptyset$.

For the statement, we recall the well-known notation that $\omega(1)$ is the set of functions growing asymptotically faster than every constant function (i.e., if $f \in \omega(1)$, then for every constant $c$, we have $f(n) > c$ for a large enough $n$), and $o(n^3)$ is the set of functions growing asymptotically slower than $n^3$.
\begin{corollary}
The reset threshold of an $n$-state automaton $(Q,\Sigma,\delta)$ where all subsets of size at least $n - \omega(1)$ are reachable, where $\omega(1)$ is the set of functions growing faster than a constant, is in $o(n^3)$. 
\end{corollary}
\begin{proof}
Suppose that for a function $f(n) \in \omega(1)$, all subsets $S$ of size $\ge n - f(n)$ are reachable.
Then by~\Cref{thm:ReachingThresholdBound}, they are reachable with a word of length at most $2n(n-|S|) \in \O(n\cdot f(n))$.
Thus, in particular, all subsets of size at most $f(n)$ are avoidable by reaching their complement with a word of length at most $\O(n\cdot f(n)))$.
By~\cite[Theorem~12]{FSV21LowerBoundsOnAvoidingThresholds}, the reset threshold of the automaton is in $o(n^3)$.
\end{proof}

\section*{Acknowledgments}
This work was supported in part by the National Science Centre, Poland under project number 2017/26/E/ST6/00191 (Robert Ferens) and 2021/41/B/ST6/03691 (Marek Szyku{\l}a).

\bibliographystyle{plain}
\bibliography{synchronization}

\begin{thebibliography}{10}

\bibitem{AGV2013SlowlySynchronizing}
D.~S. Ananichev, M.~V. Volkov, and V.~V. Gusev.
\newblock {Primitive digraphs with large exponents and slowly synchronizing
  automata}.
\newblock {\em Journal of Mathematical Sciences}, 192(3):263--278, 2013.

\bibitem{ACS2017SynchronizationAndItsFriends}
J.~Ara{\'u}jo, P.~J. Cameron, and B.~Steinberg.
\newblock {Between primitive and 2-transitive: Synchronization and its
  friends}.
\newblock {\em EMS Surv. Math. Sci}, 4(2):101--184, 2017.

\bibitem{BFRS21SynchronizingStronglyConnectedPartialDFAs}
M.~V. Berlinkov, R.~Ferens, A.~Ryzhikov, and M.~Szyku{\l}a.
\newblock {Synchronizing Strongly Connected Partial DFAs}.
\newblock In {\em STACS}, volume 187 of {\em LIPIcs}, pages 12:1--12:16.
  Schloss Dagstuhl, 2021.

\bibitem{BFS21PreimageProblems}
M.~V. Berlinkov, R.~Ferens, and M.~Szyku{\l}a.
\newblock {Preimage problems for deterministic finite automata}.
\newblock {\em Journal of Computer and System Sciences}, 115:214--234, 2021.

\bibitem{BPR2010CodesAndAutomata}
J.~Berstel, D.~Perrin, and C.~Reutenauer.
\newblock {\em Codes and Automata}.
\newblock Encyclopedia of Mathematics and its Applications. Cambridge
  University Press, 2009.

\bibitem{BCV2023CompletelyReachableInterplay}
E.~A. Bondar, D.~Casas, and M.~V. Volkov.
\newblock {Completely Reachable Automata: An Interplay Between Automata,
  Graphs, and Trees}.
\newblock {\em International Journal of Foundations of Computer Science},
  34(06):655--690, 2023.

\bibitem{BV2016CompletelyReachableAutomata}
E.~A. Bondar and M.~V. Volkov.
\newblock {Completely Reachable Automata}.
\newblock In Cezar C{\^a}mpeanu, Florin Manea, and Jeffrey Shallit, editors,
  {\em DCFS}, pages 1--17. Springer, 2016.

\bibitem{BV2018CharacterizationOfCompletelyReachable}
E.~A. Bondar and M.~V. Volkov.
\newblock {A Characterization of Completely Reachable Automata}.
\newblock In Mizuho Hoshi and Shinnosuke Seki, editors, {\em DLT}, pages
  145--155. Springer, 2018.

\bibitem{CV2022BinaryCompletelyReachable}
D.~Casas and M.~V. Volkov.
\newblock {Binary completely reachable automata}.
\newblock In Armando Casta{\~{n}}eda and Francisco Rodr{\'i}guez-Henr{\'i}quez,
  editors, {\em LATIN 2022: Theoretical Informatics}, pages 345--358. Springer,
  2022.
\newblock Full version at \url{https://arxiv.org/abs/2205.09404}.

\bibitem{CV2023DonsConjectureForBinaryCompletelyReachable}
D.~Casas and M.~V. Volkov.
\newblock {Don's conjecture for binary completely reachable automata: an
  approach and its limitations}, 2023.

\bibitem{Cerny1964}
J.~{\v{C}ern\'{y}}.
\newblock {Pozn\'{a}mka k homog\'{e}nnym experimentom s kone\v{c}n\'{y}mi
  automatami}.
\newblock {\em {Matematicko-fyzik\'alny \v{C}asopis Slovenskej Akad\'emie
  Vied}}, 14(3):208--216, 1964.
\newblock In Slovak.

\bibitem{ChenQi2008TheBestBoundsForHarmonicHumber}
C.-P. Chen and F~Qi.
\newblock {The best bounds of the nth harmonic number}.
\newblock {\em The Global Journal of Applied Mathematics \& Mathematical
  Sciences}, 1(1):41--49, 2008.

\bibitem{D2016OneContracting}
H.~Don.
\newblock {The {\v{C}}ern{\'{y}} Conjecture and 1-Contracting Automata}.
\newblock {\em Electronic Journal of Combinatorics}, 23(3):P3.12, 2016.

\bibitem{Eppstein1990}
D.~Eppstein.
\newblock {Reset sequences for monotonic automata}.
\newblock {\em SIAM Journal on Computing}, 19:500--510, 1990.

\bibitem{thisICALP}
R.~Ferens and M.~Szyku{\l}a.
\newblock {Completely Reachable Automata: A Polynomial Algorithm and Quadratic
  Upper Bounds}.
\newblock In Kousha Etessami, Uriel Feige, and Gabriele Puppis, editors, {\em
  International Colloquium on Automata, Languages, and Programming (ICALP
  2023)}, volume 261 of {\em LIPIcs}, pages 59:1--59:17. Schloss Dagstuhl --
  Leibniz-Zentrum f{\"u}r Informatik, 2023.

\bibitem{FSV21LowerBoundsOnAvoidingThresholds}
R.~Ferens, M.~Szyku{\l}a, and V.~Vorel.
\newblock {Lower Bounds on Avoiding Thresholds}.
\newblock In {\em MFCS}, volume 202 of {\em LIPIcs}, pages 46:1--46:14. Schloss
  Dagstuhl, 2021.

\bibitem{GGGJV2017BabaiAndCerny}
F.~Gonze, V.~V. Gusev, B.~Gerencser, R.~M. Jungers, and M.~V. Volkov.
\newblock {On the interplay between Babai and \v{C}ern\'{y}'s conjectures}.
\newblock In {\em DLT}, volume 10396 of {\em LNCS}, pages 185--197. Springer,
  2017.

\bibitem{GJ2019HardlyReachableSubsets}
F.~Gonze and R.~M. Jungers.
\newblock {Hardly reachable subsets and completely reachable automata with
  1-deficient words}.
\newblock {\em {Journal of Automata, Languages and Combinatorics}},
  24(2--4):321--342, 2019.

\bibitem{H2021CompletelyReachablePrimitiveAndStateComplexity}
S.~Hoffmann.
\newblock {Completely Reachable Automata, Primitive Groups and the State
  Complexity of the Set of Synchronizing Words}.
\newblock In Alberto Leporati, Carlos Mart{\'i}n-Vide, Dana Shapira, and
  Claudio Zandron, editors, {\em LATA}, LNCS, pages 305--317. Springer, 2021.

\bibitem{H2023CompletelyDistinguishable}
S.~Hoffmann.
\newblock {Completely Distinguishable Automata and the Set of Synchronizing
  Words}.
\newblock In Frank Drewes and Mikhail Volkov, editors, {\em DLT}, pages
  128--142. Springer, 2023.

\bibitem{H2023NewCharacterizationsOfPrimitiveGroups}
S.~Hoffmann.
\newblock {New characterizations of primitive permutation groups with
  applications to synchronizing automata}.
\newblock {\em Information and Computation}, 295:105086, 2023.

\bibitem{KariVolkov2021Survey}
J.~Kari and M.~V. Volkov.
\newblock {\v{C}ern\'{y} conjecture and the road colouring problem}.
\newblock In {\em Handbook of automata}, volume~1, pages 525--565. European
  Mathematical Society Publishing House, 2021.

\bibitem{K1977LowerBoundsForNaturalProofSystems}
D.~Kozen.
\newblock {Lower Bounds for Natural Proof Systems}.
\newblock In {\em Foundations of Computer Science}, SFCS, pages 254--266. IEEE
  Computer Society, 1977.

\bibitem{Maslennikova2019ResetComplexityOfIdeal}
M.~Maslennikova.
\newblock Reset complexity of ideal languages over a binary alphabet.
\newblock {\em International Journal of Foundations of Computer Science},
  30(06n07):1177--1196, 2019.

\bibitem{Pin1983OnTwoCombinatorialProblems}
J.-E. Pin.
\newblock {On two combinatorial problems arising from automata theory}.
\newblock In {\em Proceedings of the International Colloquium on Graph Theory
  and Combinatorics}, volume~75 of {\em North-Holland Mathematics Studies},
  pages 535--548. North-Holland, 1983.

\bibitem{RS2023ResetThresholdsOfTransformationMonoids}
I.~Rystsov and M.~Szyku{\l}a.
\newblock {Reset thresholds of transformation monoids}.
\newblock {\em Cybernetics and Systems Analysis}, 60(2):189--197, 2024.

\bibitem{R2000EstimationSimpleIdempotents}
I.K. Rystsov.
\newblock {Estimation of the length of reset words for automata with simple
  idempotents}.
\newblock {\em Cybern. Syst. Anal.}, 36:339--344, 2000.

\bibitem{Sandberg2005Survey}
S.~Sandberg.
\newblock {Homing and synchronizing sequences}.
\newblock In {\em Model-Based Testing of Reactive Systems}, volume 3472 of {\em
  LNCS}, pages 5--33. Springer, 2005.

\bibitem{Shitov2019}
Y.~Shitov.
\newblock {An Improvement to a Recent Upper Bound for Synchronizing Words of
  Finite Automata}.
\newblock {\em Journal of Automata, Languages and Combinatorics},
  24(2--4):367--373, 2019.

\bibitem{Szykula2018ImprovingTheUpperBound}
M.~Szyku{\l}a.
\newblock {Improving the Upper Bound on the Length of the Shortest Reset Word}.
\newblock In {\em STACS 2018}, LIPIcs, pages 56:1--56:13. Schloss
  Dagstuhl--Leibniz-Zentrum fuer Informatik, 2018.

\bibitem{Volkov2008Survey}
M.~V. Volkov.
\newblock {Synchronizing automata and the \v{C}ern\'{y} conjecture}.
\newblock In {\em Language and Automata Theory and Applications}, volume 5196
  of {\em LNCS}, pages 11--27. Springer, 2008.

\bibitem{Volkov2022Survey}
M.~V. Volkov.
\newblock {Synchronization of finite automata}.
\newblock {\em Russian Math. Surveys}, 77:819–891, 2022.

\bibitem{Zhu2024AroundDonsConjecture}
Y.~Zhu.
\newblock {Around Don's Conjecture for Binary Completely Reachable Automata}.
\newblock In Joel~D. Day and Florin Manea, editors, {\em DLT}, pages 282--295.
  Springer, 2024.

\end{thebibliography}
\section*{Appendix}

\begin{theorem}\label{thm:PEW-PSPACE}
For a given automaton $(Q,\Sigma,\delta)$ and a set $S \subseteq Q$, verifying whether there exists a properly extending word for $S$ is PSPACE-complete.
\end{theorem}
\begin{proof}
Solving the problem in NPSPACE, thus in PSPACE, is straightforward.
We start from $S$ and guess letters one by one, applying their inverse action to the current subset.
A letter $a$ can be applied to the current set $S'$ only if $\delta^{-1}(S',a)$ is an $a$-predecessor -- every state $q \in T$ has the nonempty preimage $\delta^{-1}(q,a)$.
We accept if we encounter a set larger than $|S|$ and reject if we reach the limit of $2^n$ letters.

\noindent\textbf{Reduced problem:}
For PSPACE-hardness, we reduce from the well-known problem called \textsc{Finite Automata Intersection}.

To clarify the forthcoming notation, we precise that by the superscript we denote the automaton to which an object (usually a state, a letter, or a subset) belongs, and the subscript denotes the index (usually of a state or a letter).
The superscripts and subscripts are always natural numbers.

We are given $m \ge 1$ automata $\mathrsfs{A}_i = (Q_i,\Sigma,\delta_i)$ with a common alphabet $\Sigma$, pairwise disjoint sets of states $Q_i$, initial states $\mathit{init}_i \in Q_i$, and sets of final states $F_i \subseteq Q_i$, where $i \in \{1,\ldots,m\}$.
The decision problem is whether there exists a word $w \in \Sigma^*$ such that $\delta_i(\mathit{init}_i,w) \in F_i$ for all $i$.
Such words are called \emph{accepted}.

It is known that this problem is PSPACE-complete \cite{K1977LowerBoundsForNaturalProofSystems}, and since in the original construction the given automata have one unique final state, we can further assume that here.
Hence, denote for each $i$, $F_i = \{\mathit{final}_i\}$.
Also, denote the set of all initial states $I = \{\mathit{init}_i \mid 1 \le i \le m\}$ and the set of all final states $F = \{\mathit{final}_i \mid 1 \le i \le m\}$.

\noindent\textbf{Idea:}
The idea of the reduction is as follows.
We can ask whether, for the set of all final states $F$, we can obtain a preimage containing the initial states $I$ -- this is equivalent to the existence of an accepted word.
Then we can allow extending $I$ to a larger set.
However, the other preimages of $F$ can also be larger than $m$, and some additional states in intermediate subsets may have an empty preimage.
Hence, we need to modify the construction so that only subsets of size $m$ are possible (until $I$ is reached) with exactly one state in each $Q_i$; otherwise, an obtained preimage is not a predecessor.
Finally, for $I$, we add a special letter that will allow, exclusively for this set, to obtain a larger predecessor of size $m+1$.

We introduce \emph{transitional states} and \emph{transitional letters}, which allow choosing one particular transition from a few transitions labeled by the same letter and going to the same state.
Then the states in $Q$ coming from the given automata are called \emph{main states}, and the letters in $\Sigma$ are called \emph{main letters}.
For instance, for an $i$-th automaton, if we have two incoming transitions $x \xrightarrow{a} z$ and $y \xrightarrow{a} z$ to the same state that are labeled by the same letter, where $x,y,z \in Q_i$ and $a \in \Sigma$, then we introduce transitional states $t_{a,x}$, $t_{a,y}$ and transitional letters $c_{a,x}$, $c_{a,y}$.
The action of these letters will map the corresponding transitional state to $z$.
Note that it is enough to parameterize transitional states and letters with the chosen letter and the state chosen to be in the preimage ($x$ or $y$); the destination state $z = \delta_i(x,a) = \delta_i(y,a)$ is uniquely defined by them as our automata are deterministic.
The described scheme is illustrated in~\Cref{fig:PEW-Scheme}.

Once we have chosen one transition independently (labeled by the same letter, however) in every automaton, the inverse action of main letters will serve to obtain the main states from the corresponding transitional states.

\begin{figure}\centering\large\begin{tikzpicture}[node distance=.5cm,scale=1,every node/.style={transform shape}]
\tikzset{every loop/.style={min distance=.5cm,looseness=6}}
\node[state, inner sep=0pt] (qz) {$z$};
\node[state, inner sep=0pt] [above left=1.37cm and 2cm of qz] (qx) {$x$};
\node[state, inner sep=0pt] [below left=1.37cm and 2cm of qz] (qy) {$y$};
\draw (qx) edge[] node[auto,midway]{$a$} (qz);
\draw (qy) edge[] node[auto,midway,swap]{$a$} (qz);
\end{tikzpicture}\hspace{2cm}%
\begin{tikzpicture}[node distance=2cm,scale=1,every node/.style={transform shape}]
\node[state, inner sep=0pt] (qz) {$z$};
\node[state, inner sep=0pt] [above left=.4cm and 1cm of qz] (tx) {$t_{a,x}$};
\node[state, inner sep=0pt] [below left=.4cm and 1cm of qz] (ty) {$t_{a,y}$};
\node[state, inner sep=0pt] [above left=.4cm and 1cm of tx] (qx) {$x$};
\node[state, inner sep=0pt] [below left=.4cm and 1cm of ty] (qy) {$y$};
\draw (qx) edge[] node[auto,midway]{$a$} (tx);
\draw (tx) edge[] node[auto,midway]{$c_{a,x}$} (qz);
\draw (qy) edge[] node[auto,midway,swap]{$a$} (ty);
\draw (ty) edge[] node[auto,midway,swap]{$c_{a,y}$} (qz);
\end{tikzpicture}
\caption{The scheme of the reduction construction in the proof of~\Cref{thm:PEW-PSPACE}.}\label{fig:PEW-Scheme}
\end{figure}
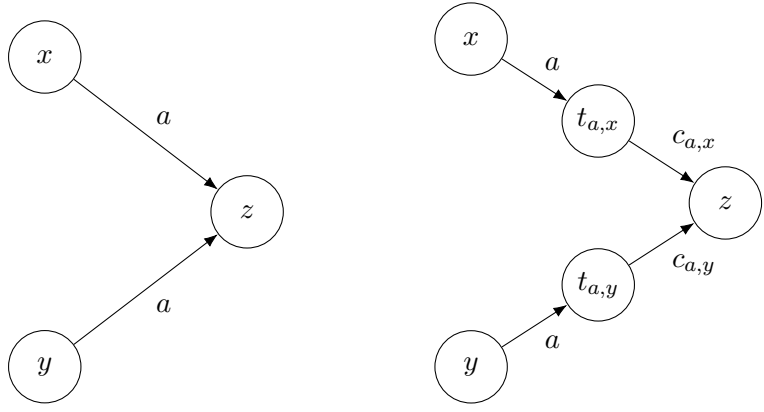

There is a special state $\mathit{trash}$ for utilizing superfluous transitions.
The transitional letters act as the identity on the main states of the other automata but send to $\mathit{trash}$ all main states within the same automaton.
Hence, it will not be possible to use a transitional letter created for one automaton more than once within one substring not containing main letters, as there will not be an incoming transition to the currently active transitional states.
Additionally, they act as the identity on the transitional states of the same letter in the other automata but also send to $\mathit{trash}$ those that do not agree on the chosen letter.
This will enforce that once a main letter is chosen, only transitional letters that agree with this choice can be used.
Hence, for each automaton, exactly once a transitional letter of this automaton must be used, as main letters will not have incoming transitions to main states.

Finally, we need a special letter $\mathit{done}$, which will be allowed only when the set $I$ is reached.
It will then act to extend $I$ to a larger set.

\noindent\textbf{Formal construction:}
We build $\mathrsfs{A} = (Q,\Gamma,\delta)$, where:
\begin{itemize}
\item $Q = \big(\bigcup_{1 \le i \le m} Q_i \cup T_i\big) \ \cup\ \{\mathit{trash}\}$, where: \begin{itemize}
\item For each $i$, $T_i = \{t_{\alpha,x} \mid \alpha \in \Sigma, x \in Q_i\}$ is the set of \emph{transitional states} for $\mathrsfs{A}_i$.
\item $\mathit{trash}$ is a unique fresh state \hfill (for trashing superfluous transitions).
\end{itemize}
\item $\Gamma = \Sigma\ \cup\ \big(\bigcup_{1 \le i \le m} \Psi_i\big)\ \cup\ \{\mathit{done}\}$, where:
\begin{itemize}
\item For each $i$, $\Psi_i = \{c_{\alpha,x} \mid \alpha \in \Sigma, x \in Q_i\}$ is the set of \emph{transitional letters} for $\mathrsfs{A}_i$.
\item $\mathit{done}$ is a unique fresh letter \hfill (for obtaining finally a larger preimage from $I$).
\end{itemize}
\item $\delta$ is defined as follows:
\begin{itemize}
\item For each $1 \le i \le m$, for each $\alpha \in \Sigma$, and for each $x \in Q_i$:\\
\begin{itemize}
\item $\delta(x,\alpha) = t_{\alpha,x}$ and $\delta(t_{\alpha,x},c_{\alpha,x}) = \delta_i(x,\alpha)$ \hfill (see~\Cref{fig:PEW-Scheme}).
\item For each $j \neq i$, for each $y \in Q_j$: \hfill (for each state of the other automata)\\
$\delta(t_{\alpha,y},c_{\alpha,x}) = t_{\alpha,y}$ \hfill (all their transitional states of the same letter are fixed);\\
$\delta(y,c_{\alpha,x}) = y$ \hfill (all their main states are fixed).
\end{itemize}
\item For each $1 \le i \le m$, $\delta(\mathit{init}_i,\mathit{done}) = \mathit{init}_i$ \hfill (initial states are fixed).
\item $\delta(\mathit{trash},\mathit{done}) = \mathit{init}_1$ \hfill (for getting an extra state in the final larger preimage).
\item All remaining transitions not defined above go to $\mathit{trash}$.
\end{itemize}
\end{itemize}

We ask whether in $\mathrsfs{A}$, there exists a properly extending word for the set of final states $F$.

\Cref{fig:ExamplePEWReduction} shows the construction for a problem instance with two given automata.

\begin{figure}[htb!]\large\begin{tikzpicture}[node distance=3cm,scale=.9,every node/.style={transform shape},bend angle=20]
\node[state,initial,initial text=\phantom{aa}] (q1){$q_1$};
\node[state,accepting] [right=3cm of q1](q2){$q_2$};

\draw(q1) edge[loop,out=120,in=60,min distance=1cm,looseness=6] node[auto,midway]{$a$} (q1);
\draw(q1) edge[bend left] node[auto,midway]{$b$} (q2);
\draw(q2) edge[bend left] node[auto,midway]{$a$} (q1);
\draw(q2) edge[loop,out=120,in=60,min distance=1cm,looseness=6] node[auto,midway]{$b$} (q2);

\node[state,initial,initial text=,accepting] [right=2cm of q2](p1) {$p_1$};
\node[state] [right=3.15cm of p1](p2){$p_2$};
\node[state] [right=3.2cm of p2](p3){$p_3$};
\draw(p1) edge node[auto,midway]{$b$} (p2);
\draw(p2) edge node[auto,midway]{$a$} (p3);
\draw(p3) edge[bend angle=30,bend left] node[auto,midway]{$a,b$} (p1);
\draw(p1) edge[loop,out=120,in=60,min distance=1cm,looseness=6] node[auto,midway]{$a$} (p1);
\draw(p2) edge[loop,out=120,in=60,min distance=1cm,looseness=6] node[auto,midway]{$b$} (p2);
\end{tikzpicture}\\\\
\begin{tikzpicture}[node distance=1cm,scale=.9,every node/.style={transform shape},bend angle=20]
\node[state] (q1) {$q_1$};
\node[state] [above right=.2cm and 1.2cm of q1](tbq1) {$t_{b,q_1}$};
\node[state] [below right=.2cm and 1.2cm of tbq1](q2) {$q_2$};
\node[state] [below right=.2cm and 1.2cm of q1](taq2) {$t_{a,q_2}$};
\node[state] [above=1.5cm of q1](taq1) {$t_{a,q_1}$};
\node[state] [above=1.5cm of q2](tbq2) {$t_{b,q_2}$};

\draw(q1) edge[bend left] node[auto,midway]{$a$} (taq1);
\draw(taq1) edge[bend left] node[auto,midway]{$c_{a,q_1}$} (q1);
\draw(q2) edge[bend left] node[auto,midway]{$b$} (tbq2);
\draw(tbq2) edge[bend left] node[auto,midway]{$c_{b,q_2}$} (q2);

\draw(q1) edge[bend angle=10,bend left] node[auto,midway,swap]{$b$} (tbq1);
\draw(tbq1) edge[bend angle=10,bend left] node[auto,midway,swap,pos=0.7]{$c_{b,q_1}$} (q2);
\draw(q2) edge[bend angle=10,bend left] node[auto,midway]{$a$} (taq2);
\draw(taq2) edge[bend angle=10,bend left] node[auto,midway,pos=.2]{$c_{a,q_2}$} (q1);

\node[state] [right=2cm of q2](p1){$p_1$};
\node[state] [right=1cm of p1](tbp1){$t_{b,p_1}$};
\node[state] [right=1cm of tbp1](p2){$p_2$};
\node[state] [right=1cm of p2](tap2){$t_{a,p_2}$};
\node[state] [right=1cm of tap2](p3){$p_3$};
\node[state] [below=1.0cm of p2](tap3){$t_{a,p_3}$};
\node[state] [below=.5cm of tap3](tbp3){$t_{b,p_3}$};
\node[state] [above=1.5cm of p1](tap1){$t_{a,p_1}$};
\node[state] [above=1.5cm of p2](tbp2){$t_{b,p_2}$};

\draw(p1) edge[bend left] node[auto,midway]{$a$} (tap1);
\draw(tap1) edge[bend left] node[auto,midway]{$c_{a,p_1}$} (p1);
\draw(p2) edge[bend left] node[auto,midway]{$b$} (tbp2);
\draw(tbp2) edge[bend left] node[auto,midway]{$c_{b,p_2}$} (p2);

\draw(p1) edge[] node[auto,midway]{$b$} (tbp1);
\draw(tbp1) edge[] node[auto,midway]{$c_{b,p_1}$} (p2);
\draw(p2) edge[] node[auto,midway]{$a$} (tap2);
\draw(tap2) edge[] node[auto,midway]{$c_{a,p_2}$} (p3);

\draw(p3) edge[bend left] node[auto,midway]{$a$} (tap3);
\draw(p3) edge[bend angle=30,bend left] node[auto,midway]{$b$} (tbp3);
\draw(tap3) edge[bend angle=30,bend left] node[auto,midway,pos=.3]{$c_{a,p_3}$} (p1);
\draw(tbp3) edge[bend angle=30,bend left] node[auto,midway,pos=.3]{$c_{b,p_3}$} (p1);

\node[state] [below=4cm of q1](trash){$\mathit{trash}$};
\draw(trash) edge node[auto,midway]{$\mathit{done}$} (q1);
\draw(trash) edge[loop,out=-60,in=-120,min distance=1cm,looseness=6] node[auto,midway]{$\Gamma \setminus \{\mathit{done}\}$} (trash);

\draw(q1) edge[loop,out=-110,in=-160,min distance=1cm,looseness=6] node[auto,midway,pos=.4]{$\mathit{done}$} (q1);
\draw(p1) edge[loop,out=-110,in=-160,min distance=1cm,looseness=6] node[auto,midway,pos=.4]{$\mathit{done}$} (p1);

\path[->] (trash) edge[<-] +(.4,1.2);
\path[->] (trash) edge[<-] +(.9,.9);
\path[->] (trash) edge[<-] +(1.2,.4);
\end{tikzpicture}\caption{An example construction from the reduction in the proof of~\Cref{thm:PEW-PSPACE}.
Two input automata are shown above and the construction is shown below.
Some transitions are omitted: for all $x \in Q_i$, $y \in Q_j$ with $i \neq j$, the transitions of the transitional letters $c_{\alpha,x}$ fix the states $y$ and $t_{\alpha,y}$; the other omitted transitions go to $\mathit{trash}$.
We have $I = \{q_1,p_1\}$ and $F = \{q_2,p_1\}$.
The shortest accepted word is $bab$, and its corresponding properly extending word from the proof is: $\mathit{done}\;b\;c_{b,q_1}\;c_{b,p_1}\;a\;c_{a,q_2}\;c_{a,p_2}\;b\;c_{b,q_1}\;c_{b,p_3}$.
}\label{fig:ExamplePEWReduction}
\end{figure}
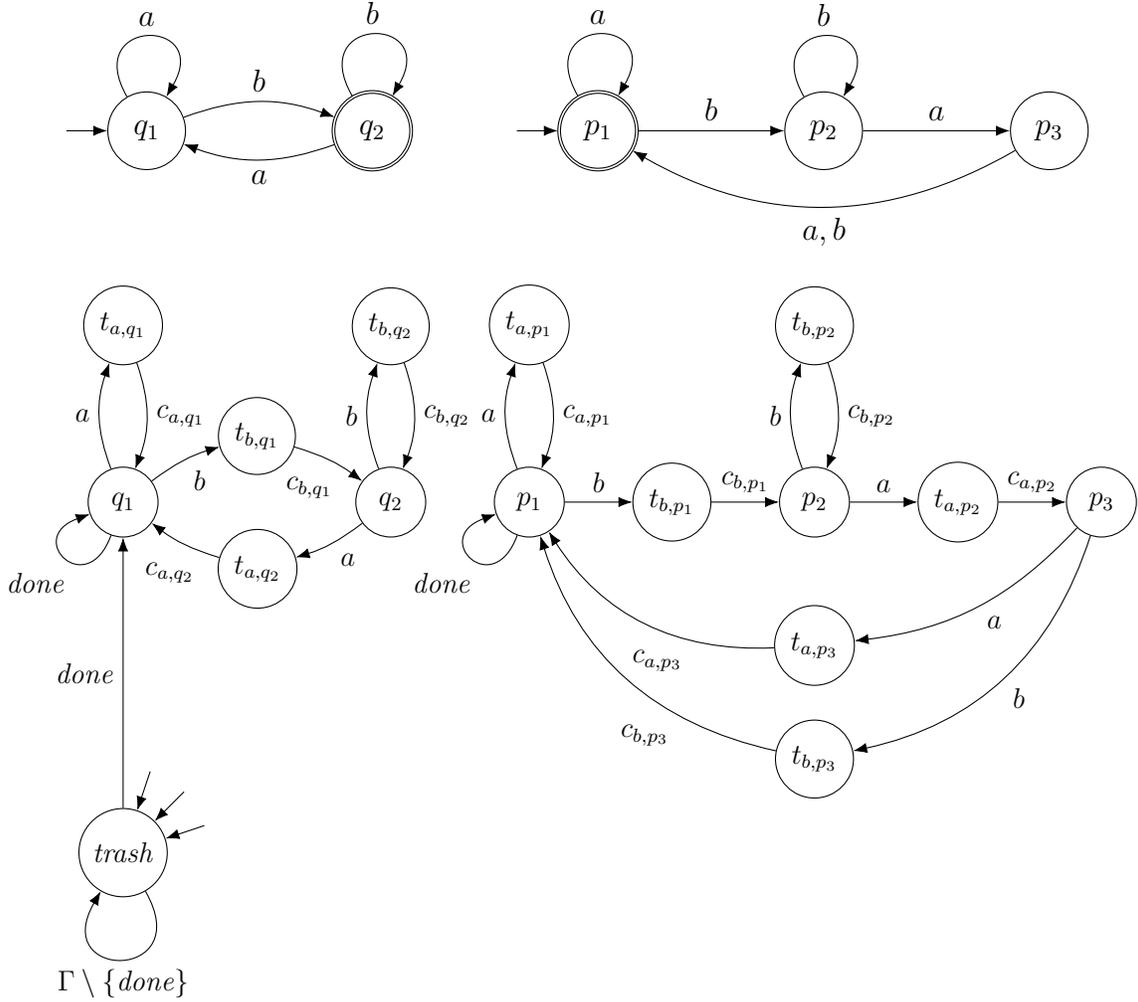

\noindent\textbf{Correctness (accepted word $\Rightarrow$ properly extending word):}

Let $w = \alpha_1 \alpha_2 \ldots \alpha_\ell$, where each $\alpha_i$ is a letter from $\Sigma$, be an accepted word for an $\ell \ge 0$, i.e., for all $i$, we have $\delta_i(\mathit{init}_i,w) = \mathit{final}_i$.
We need to track the current state while applying this word, so let $\mathit{current}(i,j)$ denote the state in $Q_i$ obtained by applying the action of a prefix of $w$ of length $0 \le j \le \ell$ to the initial state, i.e.,
\[ \mathit{current}(i,j) = \delta_i(\mathit{init}_i,\alpha_1 \alpha_2 \ldots \alpha_j) .\]
For instance, we surely have $\mathit{current}(i,0) = \mathit{init}_i$ and $\mathit{current}(i,\ell) = \mathit{final}_i$.

Now we define $w'$ that will be our properly extending word:
\[ w' = \mathit{done} \; \alpha_1\; u_1 \; \alpha_2\; u_2 \; \ldots \; \alpha_\ell \; u_\ell ,\]
where each $u_j$ is a sequence of $m$ transitional letters, one for each of the automata.
These letters depend on the current state in the way that we choose the transition that is used when the letter at the corresponding position in $w$ is applied.
We define:
\[ u_j = c_{\alpha_j,\mathit{current}(1,j)} \; c_{\alpha_j,\mathit{current}(2,j)} \;\ldots \; c_{\alpha_j,\mathit{current}(m,j)} .\]

It remains to show that $w'$ is a properly extending word for $F$.
For $0 \le j \le \ell-1$, let $w'_j$ be the suffix of $w'$ starting from $\alpha_{j+1}$, i.e., $w'_j = \alpha_{j+1} u_{j+1} \ldots \alpha_\ell u_\ell$, and additionally let $w'_\ell = \varepsilon$.

Our auxiliary claim is that $\delta^{-1}(F,w'_j) = \{\mathit{current}(i,j) \mid 1 \le i \le m\}$ and it is a $w'_j$-predecessor of $F$.
We show this by induction descending on $j$.
The base case $j=\ell$ is trivial since $\delta^{-1}(F,\varepsilon) = F$.
For the induction step for $j<\ell$, observe that for each $i$:
\[ \delta^{-1}(\mathit{current}(i,j+1), u_j) = \left\{t_{\alpha_j,\mathit{current}(i,j)}\right\}, \]
which follows due to the transition of the letter $c_{\alpha_j,\mathit{current}(i,j)}$ that maps $t_{\alpha_j,\mathit{current}(i,j)}$ to $\mathit{current}(i,j+1)$ and because the other letters in $u_j$ fix both $\mathit{current}(i,j+1)$ and $t_{\alpha_j,\mathit{current}(i,j)}$.
Then, we have $\delta^{-1}(t_{\alpha_j,\mathit{current}(i,j)},\alpha_j) = \left\{\mathit{current}(i,j)\right\}$ by the definition of the transition of $\alpha_j$, which completes the induction step.
Concluding for $j=0$, we have $\delta^{-1}(F,w'_0) = \{\mathit{current}(i,0) \mid 1 \le i \le m\} = I$.

Finally, $\delta^{-1}(I,\mathit{done}) = I \cup \{\mathit{trash}\}$.
Thus, $I \cup \{\mathit{trash}\}$ is a $w'$-predecessor of $F$ and has size $m+1$.

\noindent\textbf{Correctness (properly extending word $\Rightarrow$ accepted word):}

Let $w'$ be a properly extending word for $F$.
Observe that $w'$ must contain $\mathit{done}$, because otherwise, the preimage could not be larger than $m$, as on all the other letters, for each state except for $\mathit{trash}$, there is at most one incoming transition, and the only incoming transition from $\mathit{trash}$ to another state is on $\mathit{done}$.

Let $w''$ be the suffix of $w'$ after the last occurrence of $\mathit{done}$.
Then $\delta^{-1}(F,w'') = I$, because it must have size $m$ as if it was smaller then would not be a predecessor of $F$, and it can contain only states from $I$, as there are no incoming transitions on $\mathit{done}$ to any other main or transitional states, and it cannot contain $\mathit{trash}$, as $w''$ does not contain $\mathit{done}$.

So $w''$ contains main and transitional letters.
Let split $w'' = u_0\,\alpha_1\,u_1\,\ldots\,\alpha_\ell\,u_\ell$, where each $u_j$ is a word consisting of only transitional letters, each $\alpha_j$ is a main letter, and $\ell \ge 0$.
From $\delta^{-1}(F,w'') = I$ and that $I$ is a $w''$-predecessor of $F$, we have $\delta(I,w'')=F$.
Since there are no transitions on these letters between sets of states $Q_i \cup T_i$ for different $i$s, we also have $\delta(\mathit{init}_i,w'') = \mathit{final}_i$ for each $i$.

We define $w = \alpha_1 \ldots \alpha_\ell$ that will be our accepted word.
It remains to show that for each $i$, we have $\delta_i(\mathit{init}_i,w) = \mathit{final}_i$.
Let $i$ be fixed.
For an $0 \le j \le \ell$, let $w_j$ be the prefix of $w$ of length $j$, i.e., $w_j = \alpha_1 \ldots \alpha_j$, and let $w''_j$ be the prefix of $w''$ that ends with $u_j$, i.e., $w''_j = u_0 \alpha_1 u_1 \ldots \alpha_j u_j$.

We show by induction ascending on $j$ that $\delta(\mathit{init}_i,w''_j) = \delta_i(\mathit{init}_i,w_j)$.

For the base case $j=0$, we have $w_0 = \varepsilon$ and $w''_0 = u_0$.
Recall that all transitional letters act on a main state by either fixing it or mapping to $\mathit{trash}$.
The latter is not possible because we know that $\delta(\mathit{init}_i,w'')=\mathit{final}_i$ and the action of all letters $w''$ fixes $\mathit{trash}$, thus once the current state falls to $\mathit{trash}$, it will stay there.
Hence, $u_0$ fixes $\mathit{init}_i$, so we have $\delta(\mathit{init}_i,w''_0) = \delta(\mathit{init}_i,u_0) = \mathit{init}_i$.

For the inductive step for $j > 1$, we consider $\delta(\mathit{init}_i,w''_j)$.
From the inductive assumption, let $x = \delta(\mathit{init}_i,w''_{j-1}) = \delta_i(\mathit{init}_i,w_{j-1})$, and we consider the next state $\delta(\mathit{init}_i,w''_j) = \delta(x,\alpha_j u_j)$.
So we have $\delta(x,\alpha_j) = t_{\alpha_j,x}$ by the definition of the transitions of $\alpha_j$.
We know that $\delta(t_{\alpha_j,x},u_j)$ is a main state since $u_j$ is either followed by the main letter $\alpha_{j+1}$, whose action maps all transitional states to $\mathit{trash}$, or it is $\mathit{final}_i$ when $j=\ell$.

There is only one transitional letter whose action does not fix $t_{\alpha_j,x}$ nor maps it to $\mathit{trash}$ -- it is $c_{\alpha_j,x}$, thus, $c_{\alpha_j,x}$ must be present in $u_j$.
We have $\delta(t_{\alpha_j,x},c_{\alpha_j,x}) = \delta_i(x,\alpha_j)$ by the definition.
Since the action of transitional letters, which can be in $u_j$, only fix main states or map them to $\mathit{trash}$, where the latter case cannot hold, we know that $\delta(t_{\alpha_j,x},u_j) = \delta(t_{\alpha_j,x},c_{\alpha_j,x}) = \delta_i(x,\alpha_j)$, which is $\delta_i(\delta_i(\mathit{init}_i,w_{j-1}),\alpha_j) = \delta_i(\mathit{init}_i,w_j)$.
Thus, $\delta(\mathit{init}_i,w''_j) = \delta_i(\mathit{init}_i,w_j)$, which completes the inductive step.

Finally, we conclude that $\delta_i(\mathit{init}_i,w) = \delta(\mathit{init}_i,w'') = \mathit{final}_i$, and since this holds for all $i$, the word $w$ is accepted.

\end{proof}
\end{document}